\documentclass[12pt]{article}
\usepackage{amssymb}
\usepackage{amsmath, amsthm, geometry, csquotes}

\setcounter{MaxMatrixCols}{10}

\newtheorem{theorem}{Theorem}

\newtheorem*{axiom}{Axiom}

\newtheorem{claim}{Claim}

\newtheorem{conjecture}[theorem]{Conjecture}
\newtheorem{corollary}[theorem]{Corollary}

\newtheorem{definition}[theorem]{Definition}
\newtheorem{example}[theorem]{Example}
\newtheorem{exercise}[theorem]{Exercise}
\newtheorem{lemma}{Lemma}

\newtheorem{proposition}[theorem]{Proposition}
\newtheorem{remark}{Remark}

\typeout{TCILATEX Macros for Scientific Word 3.0 <19 May 1997>.}
\typeout{NOTICE:  This macro file is NOT proprietary and may be 
freely copied and distributed.}
\makeatletter
%
\newcount\@hour\newcount\@minute\chardef\@x10\chardef\@xv60
\def\tcitime{
\def\@time{%
  \@minute\time\@hour\@minute\divide\@hour\@xv
  \ifnum\@hour<\@x 0\fi\the\@hour:%
  \multiply\@hour\@xv\advance\@minute-\@hour
  \ifnum\@minute<\@x 0\fi\the\@minute
  }}%

\@ifundefined{hyperref}{}{}

\@ifundefined{qExtProgCall}{\def\qExtProgCall#1#2#3#4#5#6{\relax}}{}
%
%
%
%
\def\QCTOpt[#1]#2{%
  \def\QCTOptB{#1}
  \def\QCTOptA{#2}
}
\def\QCTNOpt#1{%
  \def\QCTOptA{#1}
  \let\QCTOptB\empty
}
\def\Qct{%
  \@ifnextchar[{%
    \QCTOpt}{\QCTNOpt}
}
\def\QCBOpt[#1]#2{%
  \def\QCBOptB{#1}
  \def\QCBOptA{#2}
}
\def\QCBNOpt#1{%
  \def\QCBOptA{#1}
  \let\QCBOptB\empty
}
\def\Qcb{%
  \@ifnextchar[{%
    \QCBOpt}{\QCBNOpt}
}
\def\PrepCapArgs{%
  \ifx\QCBOptA\empty
    \ifx\QCTOptA\empty
      {}%
    \else
      \ifx\QCTOptB\empty
        {\QCTOptA}%
      \else
        [\QCTOptB]{\QCTOptA}%
      \fi
    \fi
  \else
    \ifx\QCBOptA\empty
      {}%
    \else
      \ifx\QCBOptB\empty
        {\QCBOptA}%
      \else
        [\QCBOptB]{\QCBOptA}%
      \fi
    \fi
  \fi
}
\newcount\GRAPHICSTYPE
\GRAPHICSTYPE=\z@
\def\GRAPHICSPS#1{%
 \ifcase\GRAPHICSTYPE
   \special{ps: #1}%
 \or
   \special{language "PS", include "#1"}%
 \fi
}%
%
%
%
\def\graffile#1#2#3#4{%
    \bgroup
    \leavevmode
    \@ifundefined{bbl@deactivate}{\def~{\string~}}{\activesoff}
    \raise -#4 \BOXTHEFRAME{%
        \hbox to #2{\raise #3\hbox to #2{\null #1\hfil}}}%
    \egroup
}%
%
\def\draftbox#1#2#3#4{%
 \leavevmode\raise -#4 \hbox{%
  \frame{\rlap{\protect\tiny #1}\hbox to #2%
   {\vrule height#3 width\z@ depth\z@\hfil}%
  }%
 }%
}%
\newcount\draft
\draft=\z@

\newif\ifwasdraft
\wasdraftfalse

\def\GRAPHIC#1#2#3#4#5{%
 \ifnum\draft=\@ne\draftbox{#2}{#3}{#4}{#5}%
  \else\graffile{#1}{#3}{#4}{#5}%
  \fi
 }%
\def\addtoLaTeXparams#1{%
    \edef\LaTeXparams{\LaTeXparams #1}}%
%

\newif\ifBoxFrame \BoxFramefalse
\newif\ifOverFrame \OverFramefalse
\newif\ifUnderFrame \UnderFramefalse

\def\BOXTHEFRAME#1{%
   \hbox{%
      \ifBoxFrame
         \frame{#1}%
      \else
         {#1}%
      \fi
   }%
}

\def\doFRAMEparams#1{\BoxFramefalse\OverFramefalse\UnderFramefalse\readFRAMEparams#1\end}%
\def\readFRAMEparams#1{%
 \ifx#1\end%
  \let\next=\relax
  \else
  \ifx#1i\dispkind=\z@\fi
  \ifx#1d\dispkind=\@ne\fi
  \ifx#1f\dispkind=\tw@\fi
  \ifx#1t\addtoLaTeXparams{t}\fi
  \ifx#1b\addtoLaTeXparams{b}\fi
  \ifx#1p\addtoLaTeXparams{p}\fi
  \ifx#1h\addtoLaTeXparams{h}\fi
  \ifx#1X\BoxFrametrue\fi
  \ifx#1O\OverFrametrue\fi
  \ifx#1U\UnderFrametrue\fi
  \ifx#1w
    \ifnum\draft=1\wasdrafttrue\else\wasdraftfalse\fi
    \draft=\@ne
  \fi
  \let\next=\readFRAMEparams
  \fi
 \next
 }%
%

\def\IFRAME#1#2#3#4#5#6{%
      \bgroup
      \let\QCTOptA\empty
      \let\QCTOptB\empty
      \let\QCBOptA\empty
      \let\QCBOptB\empty
      #6%
      \parindent=0pt%
      \leftskip=0pt
      \rightskip=0pt
      \setbox0 = \hbox{\QCBOptA}%
      \@tempdima = #1\relax
      \ifOverFrame
          \typeout{This is not implemented yet}%
          \show\HELP
      \else
         \ifdim\wd0>\@tempdima
            \advance\@tempdima by \@tempdima
            \ifdim\wd0 >\@tempdima
               \textwidth=\@tempdima
               \setbox1 =\vbox{%
                  \noindent\hbox to \@tempdima{\hfill\GRAPHIC{#5}{#4}{#1}{#2}{#3}\hfill}\\%
                  \noindent\hbox to \@tempdima{\parbox[b]{\@tempdima}{\QCBOptA}}%
               }%
               \wd1=\@tempdima
            \else
               \textwidth=\wd0
               \setbox1 =\vbox{%
                 \noindent\hbox to \wd0{\hfill\GRAPHIC{#5}{#4}{#1}{#2}{#3}\hfill}\\%
                 \noindent\hbox{\QCBOptA}%
               }%
               \wd1=\wd0
            \fi
         \else
            \ifdim\wd0>0pt
              \hsize=\@tempdima
              \setbox1 =\vbox{%
                \unskip\GRAPHIC{#5}{#4}{#1}{#2}{0pt}%
                \break
                \unskip\hbox to \@tempdima{\hfill \QCBOptA\hfill}%
              }%
              \wd1=\@tempdima
           \else
              \hsize=\@tempdima
              \setbox1 =\vbox{%
                \unskip\GRAPHIC{#5}{#4}{#1}{#2}{0pt}%
              }%
              \wd1=\@tempdima
           \fi
         \fi
         \@tempdimb=\ht1
         \advance\@tempdimb by \dp1
         \advance\@tempdimb by -#2%
         \advance\@tempdimb by #3%
         \leavevmode
         \raise -\@tempdimb \hbox{\box1}%
      \fi
      \egroup%
}%
%
\def\DFRAME#1#2#3#4#5{%
 \begin{center}
     \let\QCTOptA\empty
     \let\QCTOptB\empty
     \let\QCBOptA\empty
     \let\QCBOptB\empty
     \ifOverFrame 
        #5\QCTOptA\par
     \fi
     \GRAPHIC{#4}{#3}{#1}{#2}{\z@}
     \ifUnderFrame 
        \nobreak\par\nobreak#5\QCBOptA
     \fi
 \end{center}%
 }%
%
\def\FFRAME#1#2#3#4#5#6#7{%
 \begin{figure}[#1]%
  \let\QCTOptA\empty
  \let\QCTOptB\empty
  \let\QCBOptA\empty
  \let\QCBOptB\empty
  \ifOverFrame
    #4
    \ifx\QCTOptA\empty
    \else
      \ifx\QCTOptB\empty
        \caption{\QCTOptA}%
      \else
        \caption[\QCTOptB]{\QCTOptA}%
      \fi
    \fi
    \ifUnderFrame\else
      \label{#5}%
    \fi
  \else
    \UnderFrametrue%
  \fi
  \begin{center}\GRAPHIC{#7}{#6}{#2}{#3}{\z@}\end{center}%
  \ifUnderFrame
    #4
    \ifx\QCBOptA\empty
      \caption{}%
    \else
      \ifx\QCBOptB\empty
        \caption{\QCBOptA}%
      \else
        \caption[\QCBOptB]{\QCBOptA}%
      \fi
    \fi
    \label{#5}%
  \fi
  \end{figure}%
 }%
%
%
%
%
%
\newcount\dispkind%

\def\makeactives{
  \catcode`\"=\active
  \catcode`\;=\active
  \catcode`\:=\active
  \catcode`\'=\active
  \catcode`\~=\active
}
\bgroup
   \makeactives
   \gdef\activesoff{%
      \def"{\string"}
      \def;{\string;}
      \def:{\string:}
      \def'{\string'}
      \def~{\string~}
    }
\egroup

\def\FRAME#1#2#3#4#5#6#7#8{%
 \bgroup
 \ifnum\draft=\@ne
   \wasdrafttrue
 \else
   \wasdraftfalse%
 \fi
 \def\LaTeXparams{}%
 \dispkind=\z@
 \def\LaTeXparams{}%
 \doFRAMEparams{#1}%
 \ifnum\dispkind=\z@\IFRAME{#2}{#3}{#4}{#7}{#8}{#5}\else
  \ifnum\dispkind=\@ne\DFRAME{#2}{#3}{#7}{#8}{#5}\else
   \ifnum\dispkind=\tw@
    \edef\@tempa{\noexpand\FFRAME{\LaTeXparams}}%
    \@tempa{#2}{#3}{#5}{#6}{#7}{#8}%
    \fi
   \fi
  \fi
  \ifwasdraft\draft=1\else\draft=0\fi{}%
  \egroup
 }%
%

\def\TEXUX#1{"texux"}

%
%
%
%
%
%
%
%
%
%

%
\long\def\QQQ#1#2{%
     \long\expandafter\def\csname#1\endcsname{#2}}%
\@ifundefined{QTP}{\def\QTP#1{}}{}
\@ifundefined{QEXCLUDE}{\def\QEXCLUDE#1{}}{}
\@ifundefined{Qlb}{}{}
\@ifundefined{Qlt}{}{}
\long\def\QQA#1#2{}%
\def\QTR#1#2{{\csname#1\endcsname #2}}
\def\EXPAND#1[#2]#3{}%
\def\NOEXPAND#1[#2]#3{}%
\def\LaTeXparent#1{}%
\def\ChildStyles#1{}%
\def\ChildDefaults#1{}%
\def\QTagDef#1#2#3{}%

\@ifundefined{correctchoice}{}{}
\@ifundefined{HTML}{\def\HTML#1{\relax}}{}
\@ifundefined{TCIIcon}{\def\TCIIcon#1#2#3#4{\relax}}{}
\if@compatibility
  \typeout{Not defining UNICODE or CustomNote commands for LaTeX 2.09.}
\else
  \providecommand{\UNICODE}[2][]{}
  
\fi

%
\@ifundefined{StyleEditBeginDoc}{}{}
%
\def\QQfnmark#1{\footnotemark}

%
%
\@ifundefined{TCIMAKEINDEX}{}{\makeindex}%
%
\@ifundefined{abstract}{%
 \def\abstract{%
  \if@twocolumn
   \section*{Abstract (Not appropriate in this style!)}%
   \else \small 
   \begin{center}{\bf Abstract\vspace{-.5em}\vspace{\z@}}\end{center}%
   \quotation 
   \fi
  }%
 }{%
 }%
\@ifundefined{endabstract}{\def\endabstract
  {\if@twocolumn\else\endquotation\fi}}{}%
\@ifundefined{maketitle}{\def\maketitle#1{}}{}%
\@ifundefined{affiliation}{\def\affiliation#1{}}{}%
\@ifundefined{proof}{}{}%
\@ifundefined{endproof}{}{}%
\@ifundefined{newfield}{\def\newfield#1#2{}}{}%
\@ifundefined{chapter}{\def\chapter#1{\par(Chapter head:)#1\par }%
 \newcount\c@chapter}{}%
\@ifundefined{part}{\def\part#1{\par(Part head:)#1\par }}{}%
\@ifundefined{section}{\def\section#1{\par(Section head:)#1\par }}{}%
\@ifundefined{subsection}{\def\subsection#1%
 {\par(Subsection head:)#1\par }}{}%
\@ifundefined{subsubsection}{\def\subsubsection#1%
 {\par(Subsubsection head:)#1\par }}{}%
\@ifundefined{paragraph}{\def\paragraph#1%
 {\par(Subsubsubsection head:)#1\par }}{}%
\@ifundefined{subparagraph}{\def\subparagraph#1%
 {\par(Subsubsubsubsection head:)#1\par }}{}%
\@ifundefined{therefore}{}{}%
\@ifundefined{backepsilon}{}{}%
\@ifundefined{yen}{}{}%
\@ifundefined{registered}{%
   \def\registered{\relax\ifmmode{}\r@gistered
                    \else$\m@th\r@gistered$\fi}%
 \def\r@gistered{^{\ooalign
  {\hfil\raise.07ex\hbox{$\scriptstyle\rm\text{R}$}\hfil\crcr
  \mathhexbox20D}}}}{}%
\@ifundefined{Eth}{}{}%
\@ifundefined{eth}{}{}%
\@ifundefined{Thorn}{}{}%
\@ifundefined{thorn}{}{}%
%
\@ifundefined{degree}{}{}%
%
\newdimen\theight
\def\Column{%
 \vadjust{\setbox\z@=\hbox{\scriptsize\quad\quad tcol}%
  \theight=\ht\z@\advance\theight by \dp\z@\advance\theight by \lineskip
  \kern -\theight \vbox to \theight{%
   \rightline{\rlap{\box\z@}}%
   \vss
   }%
  }%
 }%
\def\qed{%
 \ifhmode\unskip\nobreak\fi\ifmmode\ifinner\else\hskip5\p@\fi\fi
 \hbox{\hskip5\p@\vrule width4\p@ height6\p@ depth1.5\p@\hskip\p@}%
 }%
\def\miss{\hbox{\vrule height2\p@ width 2\p@ depth\z@}}%
%
%
\def\tcol#1{{\baselineskip=6\p@ \vcenter{#1}} \Column}  %
%
%
\@ifundefined{note}{}{}%

\def\newfmtname{LaTeX2e}

\ifx\fmtname\newfmtname
  \DeclareOldFontCommand{\rm}{\normalfont\rmfamily}{\mathrm}
  \DeclareOldFontCommand{\sf}{\normalfont\sffamily}{\mathsf}
  \DeclareOldFontCommand{\tt}{\normalfont\ttfamily}{\mathtt}
  \DeclareOldFontCommand{\bf}{\normalfont\bfseries}{\mathbf}
  \DeclareOldFontCommand{\it}{\normalfont\itshape}{\mathit}
  \DeclareOldFontCommand{\sl}{\normalfont\slshape}{\@nomath\sl}
  \DeclareOldFontCommand{\sc}{\normalfont\scshape}{\@nomath\sc}
\fi

%

\def\alpha{{\Greekmath 010B}}%
\def\beta{{\Greekmath 010C}}%
\def\gamma{{\Greekmath 010D}}%
\def\delta{{\Greekmath 010E}}%
\def\epsilon{{\Greekmath 010F}}%
\def\zeta{{\Greekmath 0110}}%
\def\eta{{\Greekmath 0111}}%
\def\theta{{\Greekmath 0112}}%
\def\iota{{\Greekmath 0113}}%
\def\kappa{{\Greekmath 0114}}%
\def\lambda{{\Greekmath 0115}}%
\def\mu{{\Greekmath 0116}}%
\def\nu{{\Greekmath 0117}}%
\def\xi{{\Greekmath 0118}}%
\def\pi{{\Greekmath 0119}}%
\def\rho{{\Greekmath 011A}}%
\def\sigma{{\Greekmath 011B}}%
\def\tau{{\Greekmath 011C}}%
\def\upsilon{{\Greekmath 011D}}%
\def\phi{{\Greekmath 011E}}%
\def\chi{{\Greekmath 011F}}%
\def\psi{{\Greekmath 0120}}%
\def\omega{{\Greekmath 0121}}%
\def\varepsilon{{\Greekmath 0122}}%
\def\vartheta{{\Greekmath 0123}}%
\def\varpi{{\Greekmath 0124}}%
\def\varrho{{\Greekmath 0125}}%
\def\varsigma{{\Greekmath 0126}}%
\def\varphi{{\Greekmath 0127}}%

\def\nabla{{\Greekmath 0272}}
\def\FindBoldGroup{%
   {\setbox0=\hbox{$\mathbf{x\global\edef\theboldgroup{\the\mathgroup}}$}}%
}

\def\Greekmath#1#2#3#4{%
    \if@compatibility
        \ifnum\mathgroup=\symbold
           \mathchoice{\mbox{\boldmath$\displaystyle\mathchar"#1#2#3#4$}}%
                      {\mbox{\boldmath$\textstyle\mathchar"#1#2#3#4$}}%
                      {\mbox{\boldmath$\scriptstyle\mathchar"#1#2#3#4$}}%
                      {\mbox{\boldmath$\scriptscriptstyle\mathchar"#1#2#3#4$}}%
        \else
           \mathchar"#1#2#3#4%
        \fi 
    \else 
        \FindBoldGroup
        \ifnum\mathgroup=\theboldgroup 
           \mathchoice{\mbox{\boldmath$\displaystyle\mathchar"#1#2#3#4$}}%
                      {\mbox{\boldmath$\textstyle\mathchar"#1#2#3#4$}}%
                      {\mbox{\boldmath$\scriptstyle\mathchar"#1#2#3#4$}}%
                      {\mbox{\boldmath$\scriptscriptstyle\mathchar"#1#2#3#4$}}%
        \else
           \mathchar"#1#2#3#4%
        \fi     	    
	  \fi}

\newif\ifGreekBold  \GreekBoldfalse
\let\SAVEPBF=\pbf
\def\pbf{\GreekBoldtrue\SAVEPBF}%

\@ifundefined{theorem}{\newtheorem{theorem}{Theorem}}{}
\@ifundefined{lemma}{\newtheorem{lemma}[theorem]{Lemma}}{}
\@ifundefined{corollary}{}{}
\@ifundefined{conjecture}{}{}
\@ifundefined{proposition}{}{}
\@ifundefined{axiom}{\newtheorem{axiom}{Axiom}}{}
\@ifundefined{remark}{\newtheorem{remark}{Remark}}{}
\@ifundefined{example}{}{}
\@ifundefined{exercise}{}{}
\@ifundefined{definition}{}{}

\@ifundefined{mathletters}{%
  \newcounter{equationnumber}  
  \def\mathletters{%
     \addtocounter{equation}{1}
     \edef\@currentlabel{\theequation}%
     \setcounter{equationnumber}{\c@equation}
     \setcounter{equation}{0}%
     \edef\theequation{\@currentlabel\noexpand\alph{equation}}%
  }
  
}{}

\@ifundefined{BibTeX}{%
    \def\BibTeX{{\rm B\kern-.05em{\sc i\kern-.025em b}\kern-.08em
                 T\kern-.1667em\lower.7ex\hbox{E}\kern-.125emX}}}{}%
\@ifundefined{AmS}%
    {\def\AmS{{\protect\usefont{OMS}{cmsy}{m}{n}%
                A\kern-.1667em\lower.5ex\hbox{M}\kern-.125emS}}}{}%
\@ifundefined{AmSTeX}{}{}%
%

\def\@@eqncr{\let\@tempa\relax
    \ifcase\@eqcnt \def\@tempa{& & &}\or \def\@tempa{& &}%
      \else \def\@tempa{&}\fi
     \@tempa
     \if@eqnsw
        \iftag@
           \@taggnum
        \else
           \@eqnnum\stepcounter{equation}%
        \fi
     \fi
     \global\tag@false
     \global\@eqnswtrue
     \global\@eqcnt\z@\cr}

\def\TCItag{\@ifnextchar*{\@TCItagstar}{\@TCItag}}
\def\@TCItag#1{%
    \global\tag@true
    \global\def\@taggnum{(#1)}}
\def\@TCItagstar*#1{%
    \global\tag@true
    \global\def\@taggnum{#1}}
%
%
%
%
%
%
%
%
%
%
%
%
%
%
%
%
%
%
%
%
%
%
%
%
%
%
%
%
%
%
%
%
%
%
%
%
%
%
%
%
%
%
%
%
%
%
%
%
%
%
%
%
%
%
%
%
%
%
%
%
%
%
%

%
%
\ifx\ds@amstex\relax
   \message{amstex already loaded}\makeatother 
\else
   \@ifpackageloaded{amsmath}%
      {\message{amsmath already loaded}\makeatother }
      {}
   \@ifpackageloaded{amstex}%
      {\message{amstex already loaded}\makeatother }
      {}
   \@ifpackageloaded{amsgen}%
      {\message{amsgen already loaded}\makeatother }
      {}
\fi
%
%
%
%
\let\DOTSI\relax
\def\RIfM@{\relax\ifmmode}%
\def\FN@{\futurelet\next}%
\newcount\intno@
\def\iint{\DOTSI\intno@\tw@\FN@\ints@}%
\def\iiint{\DOTSI\intno@\thr@@\FN@\ints@}%
\def\iiiint{\DOTSI\intno@4 \FN@\ints@}%
\def\idotsint{\DOTSI\intno@\z@\FN@\ints@}%
\def\ints@{\findlimits@\ints@@}%
\newif\iflimtoken@
\newif\iflimits@
\def\findlimits@{\limtoken@true\ifx\next\limits\limits@true
 \else\ifx\next\nolimits\limits@false\else
 \limtoken@false\ifx\ilimits@\nolimits\limits@false\else
 \ifinner\limits@false\else\limits@true\fi\fi\fi\fi}%
\def\multint@{\int\ifnum\intno@=\z@\intdots@                          
 \else\intkern@\fi                                                    
 \ifnum\intno@>\tw@\int\intkern@\fi                                   
 \ifnum\intno@>\thr@@\int\intkern@\fi                                 
 \int}
\def\multintlimits@{\intop\ifnum\intno@=\z@\intdots@\else\intkern@\fi
 \ifnum\intno@>\tw@\intop\intkern@\fi
 \ifnum\intno@>\thr@@\intop\intkern@\fi\intop}%
\def\intic@{%
    \mathchoice{\hskip.5em}{\hskip.4em}{\hskip.4em}{\hskip.4em}}%
\def\negintic@{\mathchoice
 {\hskip-.5em}{\hskip-.4em}{\hskip-.4em}{\hskip-.4em}}%
\def\ints@@{\iflimtoken@                                              
 \def\ints@@@{\iflimits@\negintic@
   \mathop{\intic@\multintlimits@}\limits                             
  \else\multint@\nolimits\fi                                          
  \eat@}
 \else                                                                
 \def\ints@@@{\iflimits@\negintic@
  \mathop{\intic@\multintlimits@}\limits\else
  \multint@\nolimits\fi}\fi\ints@@@}%
\def\intkern@{\mathchoice{\!\!\!}{\!\!}{\!\!}{\!\!}}%
\def\plaincdots@{\mathinner{\cdotp\cdotp\cdotp}}%
\def\intdots@{\mathchoice{\plaincdots@}%
 {{\cdotp}\mkern1.5mu{\cdotp}\mkern1.5mu{\cdotp}}%
 {{\cdotp}\mkern1mu{\cdotp}\mkern1mu{\cdotp}}%
 {{\cdotp}\mkern1mu{\cdotp}\mkern1mu{\cdotp}}}%
%
%
%
\def\RIfM@{\relax\protect\ifmmode}
\def\text{\RIfM@\expandafter\text@\else\expandafter\mbox\fi}
\let\nfss@text\text
\def\text@#1{\mathchoice
   {\textdef@\displaystyle\f@size{#1}}%
   {\textdef@\textstyle\tf@size{\firstchoice@false #1}}%
   {\textdef@\textstyle\sf@size{\firstchoice@false #1}}%
   {\textdef@\textstyle \ssf@size{\firstchoice@false #1}}%
   \glb@settings}

\def\textdef@#1#2#3{\hbox{{%
                    \everymath{#1}%
                    \let\f@size#2\selectfont
                    #3}}}
\newif\iffirstchoice@
\firstchoice@true
%
%
\def\Let@{\relax\iffalse{\fi\let\\=\cr\iffalse}\fi}%
\def\vspace@{\def\vspace##1{\crcr\noalign{\vskip##1\relax}}}%
\def\multilimits@{\bgroup\vspace@\Let@
 \baselineskip\fontdimen10 \scriptfont\tw@
 \advance\baselineskip\fontdimen12 \scriptfont\tw@
 \lineskip\thr@@\fontdimen8 \scriptfont\thr@@
 \lineskiplimit\lineskip
 \vbox\bgroup\ialign\bgroup\hfil$\m@th\scriptstyle{##}$\hfil\crcr}%
\def\Sb{_\multilimits@}%
\def\endSb{\crcr\egroup\egroup\egroup}%
\def\Sp{^\multilimits@}%

%
%
%
\newdimen\ex@
\ex@.2326ex
\def\rightarrowfill@#1{$#1\m@th\mathord-\mkern-6mu\cleaders
 \hbox{$#1\mkern-2mu\mathord-\mkern-2mu$}\hfill
 \mkern-6mu\mathord\rightarrow$}%
\def\leftarrowfill@#1{$#1\m@th\mathord\leftarrow\mkern-6mu\cleaders
 \hbox{$#1\mkern-2mu\mathord-\mkern-2mu$}\hfill\mkern-6mu\mathord-$}%
\def\leftrightarrowfill@#1{$#1\m@th\mathord\leftarrow
\mkern-6mu\cleaders
 \hbox{$#1\mkern-2mu\mathord-\mkern-2mu$}\hfill
 \mkern-6mu\mathord\rightarrow$}%
\def\overrightarrow{\mathpalette\overrightarrow@}%
\def\overrightarrow@#1#2{\vbox{\ialign{##\crcr\rightarrowfill@#1\crcr
 \noalign{\kern-\ex@\nointerlineskip}$\m@th\hfil#1#2\hfil$\crcr}}}%

\def\overleftarrow{\mathpalette\overleftarrow@}%
\def\overleftarrow@#1#2{\vbox{\ialign{##\crcr\leftarrowfill@#1\crcr
 \noalign{\kern-\ex@\nointerlineskip}$\m@th\hfil#1#2\hfil$\crcr}}}%
\def\overleftrightarrow{\mathpalette\overleftrightarrow@}%
\def\overleftrightarrow@#1#2{\vbox{\ialign{##\crcr
   \leftrightarrowfill@#1\crcr
 \noalign{\kern-\ex@\nointerlineskip}$\m@th\hfil#1#2\hfil$\crcr}}}%
\def\underrightarrow{\mathpalette\underrightarrow@}%
\def\underrightarrow@#1#2{\vtop{\ialign{##\crcr$\m@th\hfil#1#2\hfil
  $\crcr\noalign{\nointerlineskip}\rightarrowfill@#1\crcr}}}%

\def\underleftarrow{\mathpalette\underleftarrow@}%
\def\underleftarrow@#1#2{\vtop{\ialign{##\crcr$\m@th\hfil#1#2\hfil
  $\crcr\noalign{\nointerlineskip}\leftarrowfill@#1\crcr}}}%
\def\underleftrightarrow{\mathpalette\underleftrightarrow@}%
\def\underleftrightarrow@#1#2{\vtop{\ialign{##\crcr$\m@th
  \hfil#1#2\hfil$\crcr
 \noalign{\nointerlineskip}\leftrightarrowfill@#1\crcr}}}%

\def\qopnamewl@#1{\mathop{\operator@font#1}\nlimits@}
\let\nlimits@\displaylimits
\def\setboxz@h{\setbox\z@\hbox}

\def\varlim@#1#2{\mathop{\vtop{\ialign{##\crcr
 \hfil$#1\m@th\operator@font lim$\hfil\crcr
 \noalign{\nointerlineskip}#2#1\crcr
 \noalign{\nointerlineskip\kern-\ex@}\crcr}}}}

 \def\rightarrowfill@#1{\m@th\setboxz@h{$#1-$}\ht\z@\z@
  $#1\copy\z@\mkern-6mu\cleaders
  \hbox{$#1\mkern-2mu\box\z@\mkern-2mu$}\hfill
  \mkern-6mu\mathord\rightarrow$}
\def\leftarrowfill@#1{\m@th\setboxz@h{$#1-$}\ht\z@\z@
  $#1\mathord\leftarrow\mkern-6mu\cleaders
  \hbox{$#1\mkern-2mu\copy\z@\mkern-2mu$}\hfill
  \mkern-6mu\box\z@$}

\def\projlim{\qopnamewl@{proj\,lim}}
\def\injlim{\qopnamewl@{inj\,lim}}
\def\varinjlim{\mathpalette\varlim@\rightarrowfill@}
\def\varprojlim{\mathpalette\varlim@\leftarrowfill@}
\def\varliminf{\mathpalette\varliminf@{}}
\def\varliminf@#1{\mathop{\underline{\vrule\@depth.2\ex@\@width\z@
   \hbox{$#1\m@th\operator@font lim$}}}}
\def\varlimsup{\mathpalette\varlimsup@{}}
\def\varlimsup@#1{\mathop{\overline
  {\hbox{$#1\m@th\operator@font lim$}}}}

%
%
%
%
%
%
\begingroup \catcode `|=0 \catcode `[= 1
\catcode`]=2 \catcode `\{=12 \catcode `\}=12
\catcode`\\=12 
|gdef|@alignverbatim#1\end{align}[#1|end[align]]
|gdef|@salignverbatim#1\end{align*}[#1|end[align*]]

|gdef|@alignatverbatim#1\end{alignat}[#1|end[alignat]]
|gdef|@salignatverbatim#1\end{alignat*}[#1|end[alignat*]]

|gdef|@xalignatverbatim#1\end{xalignat}[#1|end[xalignat]]
|gdef|@sxalignatverbatim#1\end{xalignat*}[#1|end[xalignat*]]

|gdef|@gatherverbatim#1\end{gather}[#1|end[gather]]
|gdef|@sgatherverbatim#1\end{gather*}[#1|end[gather*]]

|gdef|@gatherverbatim#1\end{gather}[#1|end[gather]]
|gdef|@sgatherverbatim#1\end{gather*}[#1|end[gather*]]

|gdef|@multilineverbatim#1\end{multiline}[#1|end[multiline]]
|gdef|@smultilineverbatim#1\end{multiline*}[#1|end[multiline*]]

|gdef|@arraxverbatim#1\end{arrax}[#1|end[arrax]]
|gdef|@sarraxverbatim#1\end{arrax*}[#1|end[arrax*]]

|gdef|@tabulaxverbatim#1\end{tabulax}[#1|end[tabulax]]
|gdef|@stabulaxverbatim#1\end{tabulax*}[#1|end[tabulax*]]

|endgroup

\def\align{\@verbatim \frenchspacing\@vobeyspaces \@alignverbatim
You are using the "align" environment in a style in which it is not defined.}

\@namedef{align*}{\@verbatim\@salignverbatim
You are using the "align*" environment in a style in which it is not defined.}
\expandafter\let\csname endalign*\endcsname =\endtrivlist

\def\alignat{\@verbatim \frenchspacing\@vobeyspaces \@alignatverbatim
You are using the "alignat" environment in a style in which it is not defined.}

\@namedef{alignat*}{\@verbatim\@salignatverbatim
You are using the "alignat*" environment in a style in which it is not defined.}
\expandafter\let\csname endalignat*\endcsname =\endtrivlist

\def\xalignat{\@verbatim \frenchspacing\@vobeyspaces \@xalignatverbatim
You are using the "xalignat" environment in a style in which it is not defined.}

\@namedef{xalignat*}{\@verbatim\@sxalignatverbatim
You are using the "xalignat*" environment in a style in which it is not defined.}
\expandafter\let\csname endxalignat*\endcsname =\endtrivlist

\def\gather{\@verbatim \frenchspacing\@vobeyspaces \@gatherverbatim
You are using the "gather" environment in a style in which it is not defined.}

\@namedef{gather*}{\@verbatim\@sgatherverbatim
You are using the "gather*" environment in a style in which it is not defined.}
\expandafter\let\csname endgather*\endcsname =\endtrivlist

\def\multiline{\@verbatim \frenchspacing\@vobeyspaces \@multilineverbatim
You are using the "multiline" environment in a style in which it is not defined.}

\@namedef{multiline*}{\@verbatim\@smultilineverbatim
You are using the "multiline*" environment in a style in which it is not defined.}
\expandafter\let\csname endmultiline*\endcsname =\endtrivlist

\def\arrax{\@verbatim \frenchspacing\@vobeyspaces \@arraxverbatim
You are using a type of "array" construct that is only allowed in AmS-LaTeX.}

\def\tabulax{\@verbatim \frenchspacing\@vobeyspaces \@tabulaxverbatim
You are using a type of "tabular" construct that is only allowed in AmS-LaTeX.}

\@namedef{arrax*}{\@verbatim\@sarraxverbatim
You are using a type of "array*" construct that is only allowed in AmS-LaTeX.}
\expandafter\let\csname endarrax*\endcsname =\endtrivlist

\@namedef{tabulax*}{\@verbatim\@stabulaxverbatim
You are using a type of "tabular*" construct that is only allowed in AmS-LaTeX.}
\expandafter\let\csname endtabulax*\endcsname =\endtrivlist


 \def\endequation{%
     \ifmmode\ifinner 
      \iftag@
        \addtocounter{equation}{-1} 
        $\hfil
           \displaywidth\linewidth\@taggnum\egroup \endtrivlist
        \global\tag@false
        \global\@ignoretrue   
      \else
        $\hfil
           \displaywidth\linewidth\@eqnnum\egroup \endtrivlist
        \global\tag@false
        \global\@ignoretrue 
      \fi
     \else   
      \iftag@
        \addtocounter{equation}{-1} 
        \eqno \hbox{\@taggnum}
        \global\tag@false%
        $$\global\@ignoretrue
      \else
        \eqno \hbox{\@eqnnum}
        $$\global\@ignoretrue
      \fi
     \fi\fi
 } 

 \newif\iftag@ \tag@false
 
 \def\TCItag{\@ifnextchar*{\@TCItagstar}{\@TCItag}}
 \def\@TCItag#1{%
     \global\tag@true
     \global\def\@taggnum{(#1)}}
 \def\@TCItagstar*#1{%
     \global\tag@true
     \global\def\@taggnum{#1}}

  \@ifundefined{tag}{
     \def\tag{\@ifnextchar*{\@tagstar}{\@tag}}
     \def\@tag#1{%
         \global\tag@true
         \global\def\@taggnum{(#1)}}
     \def\@tagstar*#1{%
         \global\tag@true
         \global\def\@taggnum{#1}}
  }{}

\makeatother

\geometry{left=1in, right=1in, top=1in, bottom=1in}
\begin{document}

\title{\textbf{Decision-making under risk: when is utility maximization
equivalent to risk minimization?}}
\author{Francesco Ruscitti\thanks{%
Department of Economics, John Cabot University, Via della Lungara 233, 00165
Rome, Italy; Email: fruscitti@johncabot.edu.} \and Ram Sewak Dubey\thanks{%
Department of Economics, Feliciano School of Business, Montclair State
University, Montclair, NJ 07043, USA; E-mail: dubeyr@montclair.edu} \and %
Giorgio Laguzzi\thanks{%
University of Eastern Piedmont, Department of Science and Technological
Innovation, Alessandria, Italy; Email: giorgio.laguzzi@uniupo.it.}}
\date{}
\maketitle

\begin{abstract}
Motivated by the analysis of a general optimal portfolio selection problem, which encompasses as special cases an optimal consumption and an optimal debt-arrangement problem, we are concerned with the questions of how a personality trait like risk-perception can be formalized and whether the two objectives of utility-maximization and risk-minimization can be both achieved simultaneously. 
We address these questions by developing an axiomatic foundation of preferences for which utility-maximization is equivalent to minimizing a utility-based shortfall risk measure. 
Our axiomatization hinges on a novel axiom in decision theory, namely the risk-perception axiom.

\noindent \emph{Keywords:} financial position, fundamental theorem of asset
pricing, coherent risk measure, utility-based shortfall risk measure,
risk-perception axiom.\smallskip

\noindent \emph{Journal of Economic Literature} Classification Numbers: 
\textit{C65, D81, D01.}
\end{abstract}

\section{Introduction}

Consider a two-period model of frictionless asset markets and assume that financial markets are incomplete. 
Suppose that a single decision-maker is endowed with an initial wealth $w_{0}$ and a random non-negative endowment vector $\omega $, where random means contingent on the state of nature that will unfold `tomorrow'. 
Assume that our decision-maker's objective is to choose an `optimal' financial position (available `tomorrow') from the set of feasible financial positions. 
A financial position is feasible if it is non-negative and there exists a least-cost super-replicating portfolio whose cost `today' does not exceed $w_{0}$. By super-replicating we mean a portfolio with a payoff `tomorrow'
which added to the endowment covers the targeted payoff profile (i.e., financial position).

The optimization problem outlined above (which will be formalized in the next section) matters, and we care about it, beacuse it can be interpreted in two different ways, as follows. 
First scenario: one can think of the decision-maker as a consumer who only cares about future consumption and wants to secure an `optimal' and budget-feasible consumption vector by trading assets `today' on incomplete financial markets. 
Note that consuming $\omega $ is always a feasible option (autarky). 
Second, alternative scenario: one can think of our decision-maker as an investor, or a banker, or a financial institution or portfolio-manager. 
The investor faces the following choice: to stay in autarky or to borrow an amount of money $w_{0}$ and use the loan to invest in some portfolios. 
If the investor stays put, `tomorrow' it will end up with its own random endowment $\omega $. 
If, instead, the investor borrows money, `tomorrow' it will have to use its own endowment together with the random payoff of an appropriate portfolio to repay the loan and, possibly, earn some profit. 
Suppose the investor prefers borrowing to staying in autarky. 
Then, the investor's problem is to decide `optimally' how much to pay back and under what future contingencies (a so-called financial position), i.e., the kind of debt-arrangement or contract to enter into. 
The debt-arrangement must be feasible, in the sense that there must exist a portfolio whose present cost does not exceed the amount borrowed and whose future payoff is enough to cover the liabilities taking the endowment into account. 
If it turns out that the investor's `optimal' future financial position is risk-free, one may interpret this as the investor issuing a bond `today'. 
If, instead, the investor's `optimal' future financial position is random, i.e., state-contingent, such arrangement may be interpreted as the decision maker selling a stock
`today'.

A question immediately arises as to what optimality criterion should be used by the investor and why. It should be obvious that in the first scenario the investor seeks to maximize its period-two utility function. 
However, arguably our investor may also be interested in lowering its exposure to risk by trading assets. One says that an agent lowers its exposure to risk if the optimal financial position is less risky than autarky, where the risk is quantified by an appropriate risk measure. 
This, in turn, begs the logical question of what risk measure should be used and whether the two objectives of utility-maximization and exposure-to-risk minimization can be both achieved simultaneously. 
In the second scenario, on the other hand, it may be rather controversial to assume that the investor's objective function is the canonical utility. 
This is because, by construction of the optimization problem, the control variable (i.e., a financial position interpreted as how much to pay back and in what state of nature) need not coincide with the investor's consumption. 
Could a way out of this problem be to assume that the objective function is a measure of risk used by the investor to determine the least risky way of borrowing money, investing, and paying off debt? 
This brings up the question of whether there exist meaningful conditions on preferences under which maximizing them is the same as minimizing risk, where risk is quantified according to some known risk measure. 
Henceforth, a known risk measure is one that either has attracted considerable research attention%
\footnote{E.g., the utilty-based shortfall risk measure. 
See, for example, Gundel and Weber (2007) or Guo and Xu (2019).} 
or is utilized by regulatory agencies and financial institutions as a tool for risk-management and bank capital requirements regulation.%
\footnote{E.g., the Value at Risk (VaR) or conditional value at risk (CVaR).}

The aim of this paper is to address the aforementioned questions.
We do so by axiomatizing preferences (defined on a payoff space, or space of financial positions) which can be represented by a real-valued utility function, say $U$, which is such that $-U$ is a known coherent risk measure.
Going back to the optimization problem sketched above, clearly in this case maximizing utility is equivalent to minimizing risk. 
Furthermore, if a solution exists, the decision-maker automatically lowers its exposure to risk. 
The gist of the proof of our main result (Theorem 1) can be explained as follows. 
We set out six axioms. 
The first five axioms are closely related to the ones introduced by Gilboa and Schmeidler (1989) and result in $-U$ being a coherent risk measure, where $U$ of any financial position can be written as the minimum expected value over a (non-empty and unique) closed and convex set of finitely additive probability measures on the states of nature. 
Call such a set $C$. 
Since the first five axioms do not tell us exactly what $C$ is, there are conceivably-many such sets. 
Hence, in the absence of further axioms there is indeterminacy about the exact functional form of the utility function. 
The sixth axiom, however, formalizes and reflects the decision-maker's risk-perception and resolves the indeterminacy. 
Specifically, the sixth axiom specifies the acceptance set associated with $-U$, %
\footnote{It is well-known that the acceptance set of a coherent risk measure is a convex cone containing the positive cone of the payoff space (see section $3$).} 
call it $P$ for future reference. 
This is equivalent to defining the upper contour set of the origin of the payoff vector space. 
Given homogeneity of degree one of $U$, this, in turn, determines indifference curves, hence the decision maker's preferences and, ultimately, the functional form of $U$. 
Interestingly, we show that choosing the acceptance set appropriately via the sixth axiom yields a $-U$ which is a known risk measure (see Remark 2 below). 
In fact, as will be clear in section $4$, the sixth axiom states that a financial position is acceptable if and only if its expected value is non-negative for every (subjective) probability measure in the decision-maker's closed and convex ambiguity set, say $D$.%
\footnote{The idea, here, is that the true probability measure is unknown. 
So, the decision-maker hedges against the risk arising from ambiguity regarding the true probability measure on the states of nature.} 
Now, some straightforward algebra shows that $U$ is the minimum expected value over $D$. 
By uniqueness of the set of probability measures, we obtain $C=D$. 
Therefore, the sixth axiom pins down the specific utilty function representing given preferences.
Furthermore, it turns out that $-U$ corresponds to a robust version of the utility-based shortfall risk measure (SR, henceforth); namely, the SR induced by the (decision maker's) ambiguity set, loss function, and risk threshold given, respectively, by $D$, the identity function, and zero.%
\footnote{See, e.g., Guo and Xu (2019) and Gundel and Weber (2007) for the definition and a fairly detailed account of utility-based shortfall risk measures. 
We note, in passing, that VaR is a special case of SR.}

Our notion of risk-perception (described in the sixth axiom) contributes to the literature on study of optimal response to alternate notions of risk. 
Kimball (1990) has developed a theory of the optimal response of decision variables to risk, and, in particular, a theory of precautionary saving within an expected-utility framework. 
More specifically, he defines \enquote{prudence} as a measure of the sensitivity of the optimal choice of a decision variable to risk to estimate precautionary saving as an example of the effect of risk on a decision variable. 
One could infer that higher precautionary saving in response to increase in risk is akin to higher demand for \enquote{insurance}. 
In other words, increasing demand for \enquote{insurance} as risk increases signifies increasing prudence level of the decision maker. 
This helps us intuitively relate our risk perception axiom to the notion of prudence developed in Kimball (1990). 
In Claim 2, we make this idea precise and show that in our model (based on axioms A.1-A.6) any individual with maximum risk-perception is prudent, i.e., is \enquote{fully insured}, or demands a risk-free (constant) financial position.

The remainder of the paper is organized as follows: in section $2$ we describe the underlying asset market model we have in mind and formalize the optimization problem (which drives our interest in investigating utility-maximization versus risk-minimization) by exploiting some known results concerning the fundamental theorem of asset pricing. 
Also, in Remark 1 we discuss the relation of our optimization problem to Gundel and Weber (2007). 
In section $3$ we gather some useful results regarding coherent risk measures in Banach spaces. 
In section $4$ we develop an axiomatization of preferences which can be represented by the negative of a utility-based shortfall risk measures and we state our main result (Theorem 1). 
The proof of the latter is relegated to the appendix. 
We also discuss our novel risk-perception axiom and, in Remark 3, we show how it can be connected to the notion of prudence introduced in Kimball (1990). 
In section $5$ we make some concluding remarks.

\section{Model setup and the motivating optimization problem\protect\medskip}

\noindent Asset markets and the optimization problem described informally in
the introduction can be modeled as follows.\smallskip\ 

\noindent Let $X$ be the space of payoffs, or financial positions, which is
a vector space ordered by a cone $X_{+}$. We will assume throughout the
paper that $X$ is $\ell _{\infty }$, $X_{+}$ is $\ell _{\infty }^{+}$ (the
pointwise ordering), and that $\ell _{\infty }$ is equipped with the
supremum norm. $\mathbf{1}$ denotes the constant sequence $\left( 1,1,\cdots
,1,\cdots \right) $ which we assume is the (risk-free) payoff of a tradeable
bond, like a treasury bill. We remind the reader that $\ell _{\infty }$
endowed with the supremum norm is a Banach space, $\mathbf{1}\in \ell
_{\infty }^{+}$ is an order unit of $\ell _{\infty }$, and $\ell _{\infty
}^{+}$ is closed. Incidentally, recall that\smallskip

\begin{equation*}
\ell _{\infty }=\left\{ \mathbf{x}=(x_{1},x_{2},\cdots ,x_{n},\cdots )\in 
\mathbb{R}^{\mathbb{N}}:\underset{n\in \mathbb{N}}{\sup }|x_{n}|\leq M_{%
\mathbf{x}}\right\} \text{,}
\end{equation*}%
\smallskip

\noindent where $M_{\mathbf{x}}$ is a positive number depending on $\mathbf{x%
}$. It is of some interest to note that we may as well let $X=C\left( \left[
0,1\right] \right) $, and assume that $C\left( \left[ 0,1\right] \right) $
is ordered pointwise and is equipped with the sup-norm. Accordingly, we
could let $\mathbf{1}$ be the constant function one on $\left[ 0,1\right] $
(i.e., $\mathbf{1}\left( t\right) =1$ for all $t\in \left[ 0,1\right] $),
which is an order unit of $C\left( \left[ 0,1\right] \right) $. $H=\mathbb{R}%
^{J}$ is the portfolio space, where $J$ is the finite number of assets.
Thus, a vector $h\in \mathbb{R}^{J}$ represents holdings of the available
assets. The operator $R:\mathbb{R}^{J}\rightarrow X$ is the payoff operator,
which is a linear operator: a portfolio $h\in \mathbb{R}^{J}$ traded `today'
yields the payoff $R\left( h\right) $ in $X$ `tomorrow'. Clearly, $R\left( 
\mathbb{R}^{J}\right) $ is a linear subspace of $X$ and is denoted by $Z.$
Asset markets are assumed to be incomplete, i.e., $Z\subset X$. $q:\mathbb{R}%
^{J}\rightarrow \mathbb{R}$ is the (portfolio) pricing functional, which is
a linear functional: given any $h\in \mathbb{R}^{J}$, $q\left( h\right) $ is
the cost of that portfolio. Clearly, the (portfolio) pricing functional is
fully characterized by the vector of exogenously-given asset prices. We
assume that $q:\mathbb{R}^{J}\rightarrow \mathbb{R}$ is arbitrage-free.%
\footnote{%
See, e.g., LeRoy and Werner (2014) for the definition of arbitrage-free
asset prices in simple two-period models of financial markets.} \smallskip

\noindent The decision-maker has an initial wealth $w_{0}>0$ (at $t=0$) and
an endowment vector $\omega \in X_{+}\backslash \left\{ \mathbf{0}\right\} $
(at $t=1$). We let the function $U:X\rightarrow \mathbb{R}$ be a numerical
representation of the decision-maker's preferences. We are ready to set up
the optimization problem discussed in the introduction. It is inspired by
the portfolio selection problem examined in Gundel and Weber (2007) but the
problem we present next is adapted to the environment we are focusing
on:
\begin{align}
\max_{\mathbf{c}\in X_{+}}\; & U\left(\mathbf{c}\right)\notag \\ 
\text{subject to:}\; &\inf \left\{q\left( h\right): h\in \mathbb{R}^{J}\;\text{and}\; \mathbf{c}\leq \omega + R\left(h\right) \right\} \leq w_{0}. \tag{1}
\end{align}
Note that $\omega $ is feasible for program $(1)$, thus the
feasible set is non-empty. Since $\mathbf{1}\in X_{+}$ is an order unit of $%
X $ and, by assumption, it is also the payoff of a portfolio including only
the treasury bill, there exists a $\bar{h}\in \mathbb{R}^{J}$\ such that $%
R\left( \bar{h}\right) =\mathbf{1}$\ is an interior point of $X_{+}$. In
turn, this implies, by the fundamental theorem of asset pricing, that the
set of valuation functionals $VF$ is non-empty.\footnote{%
See, e.g., Ross (1978) or Clark (1993) for the valuation problem and the
fundamental theorem of asset pricing in ordered vector spaces.} Recall that
a valuation functional $f:X\rightarrow \mathbb{R}$, roughly, is a positive,
continuous, and linear functional\ such that $f\left( \mathbf{x}\right)
=q\left( h\right) $ for all $\mathbf{x}\in Z$, where $h\in \mathbb{R}^{J}$
and $\mathbf{x}=R\left( h\right) $. It's easy to see that since $\mathbf{1}$
is an order unit and $R$ is linear, for every $\mathbf{c}\in X$ there exists
a portfolio $h\in \mathbb{R}^{J}$ which super-replicates $\mathbf{c}-\omega $%
, i.e., $\mathbf{c}\leq \omega +R\left( h\right) $. Moreover, the cost $%
q\left( h\right) $ of any such portfolio is bounded below by $f\left( 
\mathbf{c}-\omega \right) $, where $f$ is any element of $VF$. 
Therefore, it is not difficult to see that the $\inf $ in program $(1)$ defines a real-valued function on $X_{+}$ (known as the super-replication price) and
\begin{equation}
\inf \left\{q\left( h\right) :h\in \mathbb{R}^{J}\text{ and }\mathbf{c}\leq \omega +R\left( h\right) \right\} \geq \sup \left\{f\left( \mathbf{c} - \omega \right): f\in VF\right\}  \tag{2}
\end{equation}%
Finally, we put on record (without proof) a result which will come
in handy for the proof of Claim 2 below: it is possible to show that the
feasible set of program $(1)$ is convex and such that $\mathbf{c}_{1}\leq 
\mathbf{c}_{2}$ and $\mathbf{c}_{2}$ feasible imply that $\mathbf{c}_{1}$ is
feasible as well.\smallskip

\begin{remark}
\emph{In a simple two-period model with no arbitrage opportunities and in which the optimal solution to program $(1)$ can be exactly replicated, the optimization problem studied by Gundel and Weber (2007) reduces to the above program $(1)$. 
To see this, note that by virtue of exact replication $(2)$ above can be assumed to hold with equality, which is then the Gundel and Weber's budget constraint. 
Moreover, given our axiomatization in section $4$ and the proof of Theorem 1 in the appendix, it turns out that the objective function in program $(1)$ is essentially the same as Gundel and Weber's, and maximizing it is the same as minimizing a SR risk measure. 
Consequently, since the zero vector is feasible for program $(1)$, the risk constraint in Gundel and Weber's problem becomes redundant, thus it need not be accounted for.}
\end{remark}

\section{Background: risk measures and some useful results}

\noindent In this section we gather some facts and concepts, about coherent
risk measures, that will be used in section $4$ and in the appendix. They
are taken and rearranged from Kountzakis et al. (2013) to which we refer the
reader for a detailed account.\smallskip

\noindent Let $X$ be a Banach space ordered by the (non-trivial) cone $P$,
and let $x_{0}\in P$ be an order unit of $X$. A function $\rho :X\rightarrow 
\mathbb{R}$ is said to be a coherent risk measure if it satisfies the
following properties:\smallskip

\begin{enumerate}
\item {$x\geq y$ implies $\rho \left( x\right) \leq \rho \left( y\right) $.}

\item {$\rho \left( x+tx_{0}\right) =\rho \left( x\right) -t$ for any $t\in 
\mathbb{R}$.}

\item {$\rho \left( x+y\right) \leq \rho \left( x\right) +\rho \left(
y\right) $ for every $x,y\in X$.}

\item {$\rho \left( \lambda x\right) =\lambda \rho \left( x\right) $ for
every $x\in $}$X${\ and every real number $\lambda \geq 0$.\smallskip }
\end{enumerate}

\noindent Suppose we are given a coherent risk measure $\rho $. Let $%
\mathcal{A}_{\rho }=\left\{ x\in X:\rho \left( x\right) \leq 0\right\} $,
which is the so-called acceptance set associated with $\rho $. It turns out
that $\mathcal{A}_{\rho }$ is a cone of $X$ which contains $P$, and for each 
$x\in X$ we have\smallskip

\begin{equation}
\rho \left( x\right) =\inf \left\{ t\in \mathbb{R}:x+tx_{0}\in \mathcal{A}_{\rho }\right\} .  \tag{3}
\end{equation}%
\smallskip

\noindent In the proof of our Theorem 1 we will exploit the following useful
result (see Theorem 1 in Kountzakis et al., 2013):\smallskip

\begin{lemma}
\textit{Suppose that} $X$\ \textit{is a Banach space ordered by the cone} $P$. 
\textit{If} $x_{0}$\ \textit{is an order unit of }$X$, \textit{then the function}\smallskip 
\begin{equation}
\rho _{crm}\left( x\right) =\inf \{t\in \mathbb{R}:x+tx_{0}\in P\}\;\text{\textit{for every}}\;x\in X\text{,} \tag{4}
\end{equation}
\textit{is a coherent risk measure with respect to the cone }$P$\ \textit{and the vector} $x_{0}$. \textit{Moreover, if} $P$ \textit{is closed then} $A_{\rho _{crm}}=P$.%
\footnote{In the sequel we will refer to the function $(4)$ as the (coherent) risk measure defined on $X$ with respect to the cone $P$ and the order unit $x_{0} $.}.
\end{lemma}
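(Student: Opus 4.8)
The plan is to fix $x\in X$ and study the set $S_x:=\{t\in\mathbb{R}:x+tx_0\in P\}$, over which the infimum defining $\rho_{crm}(x)$ is taken. I would first show that $S_x$ is a nonempty half-line that is bounded below, so that $\rho_{crm}(x)$ is a well-defined real number, and then obtain each of the four axioms of a coherent risk measure by a one-line manipulation of $S_x$. The second assertion will come from observing that, when $P$ is closed, this infimum is in fact attained.

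For well-definedness: nonemptiness of $S_x$ is immediate from the order-unit property applied to $-x$, which produces $\lambda>0$ with $x+\lambda x_0\in P$, so $\lambda\in S_x$. Since $P$ is a convex cone containing $x_0$, the set $S_x$ is upward closed, hence an interval of the form $[\rho_{crm}(x),\infty)$ or $(\rho_{crm}(x),\infty)$. For the lower bound I would apply the order-unit property to $x$ itself, getting $\lambda'>0$ with $\lambda'x_0-x\in P$; then for any $t\in S_x$, adding $x+tx_0\in P$ and $\lambda'x_0-x\in P$ gives $(t+\lambda')x_0\in P$, and since $x_0$ is a nonzero vector of a proper cone this forces $t\ge-\lambda'$. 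Hence $\rho_{crm}(x)\in\mathbb{R}$.

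Next I would read off the axioms. Monotonicity follows because $x\ge y$ implies $S_y\subseteq S_x$ (use $P+P\subseteq P$); cash-invariance because $S_{x+sx_0}=S_x-s$; subadditivity because $S_x+S_y\subseteq S_{x+y}$; and positive homogeneity because $S_{\lambda x}=\lambda S_x$ for $\lambda>0$ (using that $P$ is a cone), supplemented by the separate check $\rho_{crm}(0)=\inf\{t:tx_0\in P\}=0$, which once more uses that $x_0$ is a nonzero element of a proper cone. For the acceptance set, the inclusion $P\subseteq A_{\rho_{crm}}$ needs no extra hypothesis, since $x\in P$ gives $0\in S_x$ and hence $\rho_{crm}(x)\le 0$. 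For the reverse inclusion, assuming $P$ closed and $\rho_{crm}(x)\le 0$, I would choose $t_n\searrow\rho_{crm}(x)$ with $x+t_nx_0\in P$; then $x+t_nx_0\to x+\rho_{crm}(x)x_0$ in norm, so closedness of $P$ gives $x+\rho_{crm}(x)x_0\in P$, and since $-\rho_{crm}(x)\ge 0$ and $x_0\in P$ we conclude $x=(x+\rho_{crm}(x)x_0)+(-\rho_{crm}(x))x_0\in P$.

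The main obstacle --- really the only step that is not just a rewriting of $S_x$ --- is the finiteness of $\rho_{crm}$, and within that the lower bound: this is exactly where one must invoke that $P$ is a proper cone, so that a nonzero order unit cannot lie in $P\cap(-P)$, and it is the same fact that keeps $\rho_{crm}(0)$ from being $-\infty$. In the last assertion, the point worth flagging is that closedness of $P$ enters precisely to guarantee that the infimum defining $\rho_{crm}(x)$ is attained, which is what promotes the inequality $\rho_{crm}(x)\le 0$ to the membership $x\in P$.
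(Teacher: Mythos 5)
Your proof is correct, and it is worth noting that the paper itself does not prove this lemma at all: it is imported verbatim from Kountzakis and Polyrakis (2013, Theorem 1), so there is no in-paper argument to compare against. Your direct verification is exactly the standard one: nonemptiness of $S_x=\{t:x+tx_0\in P\}$ from the order-unit property applied to $-x$, the lower bound from the order-unit property applied to $x$, the four axioms by elementary manipulations of $S_x$, and attainment of the infimum under closedness of $P$ to get $A_{\rho_{crm}}\subseteq P$ (the reverse inclusion being free). One small remark on the step you rightly flag as the crux: you invoke properness of $P$ (i.e. $P\cap(-P)=\{0\}$) to rule out $(t+\lambda')x_0\in P$ with $t+\lambda'<0$ and to get $\rho_{crm}(\mathbf{0})=0$. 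What is actually needed in both places is only $-x_0\notin P$, and this already follows from the paper's standing assumption that $P$ is a non-trivial cone: if $-x_0\in P$, then for any $x$ choose $\lambda>0$ with $\lambda x_0-x\in P$ and write $-x=(\lambda x_0-x)+\lambda(-x_0)\in P$, forcing $P=X$. So your argument goes through under the paper's stated hypotheses whether or not one reads ``cone'' as ``proper cone'' (as the cited reference does); either justification closes the step, and the rest of your proof is complete as written.
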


\noindent We state below (without proof), as a Lemma, Theorem 6 in Kountzakis et al. (2013). 
It will be invoked in the proof of our main result (i.e., Theorem 1).

\begin{lemma}
\textit{Let} $X$ \textit{be a Banach space ordered by the closed cone} $P$ \textit{and suppose that} $x_{0}\in P$ \textit{is an order unit of }$X$. 
\textit{If} $\rho $ \textit{is the risk measure defined on} $X$ \textit{with respect to the cone} $P\subseteq X$ \textit{and the order unit} $x_{0}\in P$, \textit{then for any} $x\in X$, $\rho \left( x\right) <0$ \textit{if and only if }$x$ \textit{is an interior point of }$P$.
\end{lemma}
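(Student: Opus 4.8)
The plan is to reduce everything to one geometric fact: the order unit $x_0$ is an interior point of $P$. Granting that, both directions are short. To establish the fact I would first note that $x_0\neq\mathbf{0}$ (otherwise $\rho$ is not real-valued) and consider the symmetric order interval $C:=\{y\in X:-x_0\leq y\leq x_0\}=\{y:y+x_0\in P\}\cap\{y:x_0-y\in P\}$. Since $P$ is a closed convex cone, $C$ is closed, convex and symmetric (i.e.\ $y\in C$ iff $-y\in C$), and $\mathbf{0}\in C$; and since $x_0$ is an order unit, $C$ is absorbing, so $X=\bigcup_{n\geq1}nC$. Completeness of $X$ together with the Baire category theorem then force some $nC$, hence $C$ itself, to have nonempty interior; by symmetry, convexity, and $\mathbf{0}\in C$ this upgrades to $\mathbf{0}\in\mathrm{int}\,C$. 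Finally $C\subseteq\{y:x_0+y\in P\}$, so a ball $B(\mathbf{0},\delta)\subseteq C$ yields $B(x_0,\delta)\subseteq P$, i.e.\ $x_0\in\mathrm{int}\,P$. (Alternatively one may invoke the standard fact that in an ordered Banach space with closed cone the order units are exactly the interior points of the cone; in the concrete case $X=\ell_\infty$, $x_0=\mathbf{1}$ it is immediate, since $\|y-\mathbf{1}\|_\infty<1$ forces $y\geq\mathbf{0}$.)

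Next I would prove that $x\in\mathrm{int}\,P$ implies $\rho(x)<0$. Choose $\delta>0$ with $B(x,\delta)\subseteq P$ and pick any $t$ with $0<t<\delta/\|x_0\|$. Then $\|(x-tx_0)-x\|=t\|x_0\|<\delta$, so $x-tx_0\in P$, i.e.\ $x+(-t)x_0\in P$; hence $\rho(x)=\inf\{s\in\mathbb{R}:x+sx_0\in P\}\leq-t<0$.

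For the converse I would start from $\rho(x)<0$. Since the infimum defining $\rho(x)$ is negative, there is some $t<0$ with $x+tx_0\in P$; set $s:=-t>0$, so that $x-sx_0\in P$ and $x=(x-sx_0)+sx_0$. By the first step $x_0\in\mathrm{int}\,P$, and since $\mathrm{int}\,P$ is stable under multiplication by positive scalars (scaling is a homeomorphism and $sP=P$ for $s>0$), we get $sx_0\in\mathrm{int}\,P$; pick $\delta>0$ with $B(sx_0,\delta)\subseteq P$. Then $B(x,\delta)=(x-sx_0)+B(sx_0,\delta)\subseteq(x-sx_0)+P\subseteq P$, using that $P$ is closed under addition. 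Hence $x\in\mathrm{int}\,P$, and the equivalence follows.

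The only step that is not routine bookkeeping with the definition of $\rho$ and elementary convexity is establishing $x_0\in\mathrm{int}\,P$; this is where completeness of $X$ and closedness of $P$ are genuinely used (via Baire category), and I expect it to be the main obstacle — though, being a well-known property of order units in ordered Banach spaces, it could equally well be dispatched by citation.
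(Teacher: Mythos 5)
Your proposal is correct, and it is worth noting that the paper itself gives no argument for this statement: it is quoted verbatim (without proof) as Theorem 6 of Kountzakis and Polyrakis (2013), so the only "proof" in the paper is a citation. Your write-up supplies exactly what that citation covers, and it does so along the standard route: the whole content of the lemma is the fact that an order unit of a Banach space ordered by a closed (convex) cone is an interior point of that cone, which you obtain correctly by applying Baire category to the closed, symmetric, convex, absorbing order interval $C=\{y:-x_{0}\leq y\leq x_{0}\}$ and then translating a ball around $\mathbf{0}$ to one around $x_{0}$. Given $x_{0}\in\mathrm{int}\,P$, your two directions are sound: for $x\in\mathrm{int}\,P$ the point $x-tx_{0}$ stays in $P$ for small $t>0$, so $\rho(x)\leq -t<0$; conversely $\rho(x)<0$ yields $s>0$ with $x-sx_{0}\in P$, and writing $x=(x-sx_{0})+sx_{0}$ with $sx_{0}\in\mathrm{int}\,P$ places $x$ in $\mathrm{int}\,P$ because $P+\mathrm{int}\,P\subseteq\mathrm{int}\,P$ for a convex cone. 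The only implicit hypotheses you use — convexity of $P$ (needed for $C$ convex and for $P+P\subseteq P$) and $x_{0}\neq\mathbf{0}$ — are part of what "ordering cone" and "real-valued risk measure" mean in the paper's setting, so there is no gap; this is essentially the same argument underlying the cited result, here made self-contained.
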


\section{An axiomatization of preferences for which utility maximization is equivalent to risk minimization}

\noindent We will denote by $\ell _{\infty }^{C}$ the set of constant
sequences in $\ell _{\infty }$. Let $\succsim $ be a binary (preference)
relation on $\ell _{\infty }$. We state below, as axioms, several properties
that $\succsim $ is assumed to satisfy. Except for the risk-perception axiom
A.6., the other axioms we use are more or less imported from Gilboa and
Schmeidler (1989) to the framework under analysis. To be precise, our axiom
(A.4) part (a) is from Gilboa and Schmeidler (1989) but part (b) is not
contemplated therein. Also, it's easy to see that our axiom (A.3) implies
the continuity axiom posited in Gilboa and Schmeidler (1989). We will prove
that our axioms fully characterize a utility function $U$ such that $-U$ is
equal to a (robust) utility-based shortfall risk measure. In the remainder
of the paper keep in mind that $\succ $ and $\sim $ denote, respectively,
the asymmetric and symmetric parts of $\succsim $.\smallskip
\begin{axiom}[\textbf{A.1}]
\textit{Completeness and transitivity}: \textit{For all }$\mathbf{x},\mathbf{y}\in \ell _{\infty }$, $\mathbf{x}\succsim \mathbf{y}$ \textit{or} $\mathbf{y}\succsim \mathbf{x}$. 
\textit{Also, for all} $\mathbf{x}$, $\mathbf{y}$ \textit{and} $\mathbf{z}$ \textit{in} $\ell _{\infty }$, \textit{if} $\mathbf{x}\succsim \mathbf{y}$ \textit{and} $\mathbf{y}\succsim \mathbf{z}$ \textit{then} $\mathbf{x}\succsim \mathbf{z}$.
\end{axiom}

\begin{axiom}[\textbf{A.2}]
\textit{Certainty-Independence}: \textit{For all} $\mathbf{x},\mathbf{y}\in \ell _{\infty }$ \textit{and} $h\mathbf{1\in }\ell_{\infty }^{C}$%
\footnote{Where $h$ is any real number.}, 
\textit{and for all} $\alpha \in \left(0,1\right) $, $\mathbf{x\succsim y}$ \textit{if and only if} $\alpha \mathbf{x}+\left( 1-\alpha \right) h\mathbf{1\succsim }\alpha \mathbf{y} + \left(1-\alpha \right) h\mathbf{1}$.
\end{axiom}

\begin{axiom}[\textbf{A.3}]
\textit{Continuity}: \textit{For every} $\mathbf{x}\in \ell _{\infty }$, \textit{the subsets} $\left\{ \mathbf{y}\in \ell_{\infty }:\mathbf{y}\succsim \mathbf{x}\right\} $ \textit{and} $\left\{ \mathbf{z}\in \ell _{\infty }:\mathbf{x}\succsim \mathbf{z}\right\}$ \textit{are both norm-closed}.
\end{axiom}

\begin{axiom}[\textbf{A.4}]
\textit{Uniform monotonicity}: \textit{(a) For all} $\mathbf{x}$ \textit{and} $\mathbf{y}$ \textit{in} $\ell _{\infty}$, \textit{if }$x_{n}\geq y_{n}$ \textit{for every} $n\in \mathbb{N}$ \textit{then} $\mathbf{x\succsim y}$. 
\textit{(b) For all} $\mathbf{x}$\textit{\ and }$\mathbf{y}$\textit{\ in} $\ell _{\infty }$, \textit{if there exists some} $\epsilon >0$ \textit{such that }$x_{n}\geq y_{n}+\epsilon $ \textit{for every }$n\in \mathbb{N}$, \textit{then} $\mathbf{x}\succ \mathbf{y}$.
\end{axiom}

\begin{axiom}[\textbf{A.5}]
\textit{Uncertainty aversion}: \textit{For all} $\mathbf{x}$ \textit{and} $\mathbf{y}$ \textit{in} $\ell _{\infty }$ \textit{and} $\alpha $ $\in \left( 0,1\right) $, $\mathbf{x\sim y}$ \textit{implies} $\alpha \mathbf{x}+\left( 1-\alpha \right) \mathbf{y\succsim x}$.
\end{axiom}
Before stating the sixth and last axiom, a useful reminder is in
order: recall that%
\footnote{See, e.g., Theorem 14.9, the proof of Theorem 14.10, and Corollary 14.11 in Aliprantis and Border (2006).} 
to each finitely additive probability measure  $\pi $ on $\left( \mathbb{N},2^{\mathbb{N}}\right) $ there corresponds a unique positive linear functional $x_{\pi }^{\ast }$ in the topological dual of $\ell_{\infty}$ satisfying $x_{\pi }^{\ast }\left( \mathbf{x}\right) =\int_{\mathbb{N}}\mathbf{x}d\pi $ for all $\mathbf{x\mathbf{=}}\left(x_{n}\right) \mathbf{\in }\ell_{\infty}$. 
Thus, we can view $x_{\pi}^{\ast }\left(\mathbf{x}\right) $ as the expected value $\mathbb{E}_{\pi }[\mathbf{x}]$.

\begin{axiom}[\textbf{A.6}]\textit{Risk-perception}:
\begin{equation*}
\left\{\mathbf{x}\in \ell_{\infty }:\mathbf{x\succsim 0}\right\} =\left\{\mathbf{x}\in \ell_{\infty }:\mathbb{E}_{\pi }\left[ \mathbf{x}\right] \geq 0\text{ for all }\pi \in D\right\} =P,
\end{equation*}
\textit{where }$D$ \textit{is a closed and convex subset of} \textit{finitely additive probability measures on }$\left( \mathbb{N}, 2^{\mathbb{N}}\right) $.
\end{axiom}

\noindent Note that $D$ can be interpreted as the person-specific ambiguity set. 
This is because $\mathbb{E}_{\pi }\left[ \mathbf{x}\right] \geq 0$ for all $\pi \in D$ if and only if $\inf \left\{ \mathbb{E}_{\pi }\left[ \mathbf{x}\right] :\pi \in D\right\} \geq 0$.

\noindent Axioms (A.1) and (A.3) are rather restrictive but very standard properties in decision theory. 
For a discussion of axiom (A.2) we refer the reader to Gilboa and Schmeidler (1989). 
As in Gilboa and Schmeidler (1989) and Schmeidler (1989), Axiom (A.5) suggests, intuitively, that "smoothing" or averaging payoffs over random states of nature makes the decision-maker better off. 
Loosely speaking, as uncertainty decreases (in moving from $\mathbf{x}$ to $\alpha \mathbf{x}+\left(1-\alpha \right) \mathbf{y}$) the decision-maker is better-off. 
As to the risk-perception axiom, it will become clear when we prove Theorem 1 (see the appendix) that $P$ is the set of financial positions that have non-positive risk (hence, the set of acceptable positions) from the perspective of the risk measure corresponding to the utility function which represents the given preferences. 
This explains why we call $P$ the acceptance set. 
Axiom (A.6) captures the decision-maker's perception of risk.
In the  psychology literature on risk taking, it is well-understood that 
To see this, recall that the true probabilty measure on $\left( \mathbb{N}, 2^{\mathbb{N}}\right)$ is unknown.
Therefore, if $D$ is a singleton (i.e., no ambiguity) basically the decision-maker is a \enquote{confident} individual who is not afraid of making mistakes. 
In fact, if $D$ is a singleton the acceptance set $P$ is \enquote{large} and contains $\ell_{\infty}^{+}$, which signifies that the decision-maker has a low perception of risk: a \enquote{large} acceptance set means that \enquote{many} financial positions are not deemed risky. 
The polar opposite case is when $D$ is the set of all finitely additive probability measures on $\left(\mathbb{N}, 2^{\mathbb{N}}\right)$ (maximum ambiguity): the decision-maker is not \enquote{confident}, it wants to be on the safe side. 
In fact, if all conceivable probability measures are taken into account $P$ coincides with $\ell_{\infty}^{+}$, therefore $P$ is the smallest acceptance set. 
This signifies that the decision-maker has a high perception of risk. 
In fact, a \enquote{small} acceptance set means that \enquote{many} financial positions are perceived as risky.

\noindent Note that in our model ambiguity is positively related to risk-perception. 
Therefore, it is pertinent to refer to the significant body of research in psychology dealing with the concept of risk perception. 
To cite a recent one, we have been inspired by Weber (2004) who focuses on the influence of psychology in the decision making process under risk and uncertainty. 
He reviews the existing evidence in the literature which indicates that perceptions of risk are both \emph{subjective} and \emph{relative}.
Thus, since risk perception is subjective it differs across situations, persons, cultures and gender; since risk perception is also relative in nature, it depends on a standard of reference.
These features of risk perception provide richer set of explanations for the differences in risk taking observed in real life.



\noindent Basically, we add the risk-perception axiom to a system of axioms which is a slight modification of the one considered by Gilboa and Schmeidler (in a different setup, though). 
Therefore, it is logical to ask whether the risk-perception axiom is non-redundant. 
We claim that A.6 is actually non-redundant. 
To establish the claim it will suffice to prove that it is not the case that if axioms (A.1)-(A.5) hold true then axiom (A.6) is also true. 
Since the proof is easy, in the interest of brevity we will just sketch it out.

\begin{claim}
\textit{The risk-perception axiom is non-redundant.}
\end{claim}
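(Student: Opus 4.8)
The plan is to disprove the implication ``A.1--A.5 $\Rightarrow$ A.6'' by exhibiting a single preference relation $\succsim$ on $\ell_{\infty}$ that satisfies A.1--A.5 but for which the set $P=\{\mathbf{x}\in\ell_{\infty}:\mathbf{x}\succsim\mathbf{0}\}$ cannot be written as $\{\mathbf{x}:\mathbb{E}_{\pi}[\mathbf{x}]\geq 0\text{ for all }\pi\in D\}$ for any closed and convex set $D$ of finitely additive probability measures. The point that makes this promising is that every set of the latter form is a \emph{convex}, norm-closed cone containing $\ell_{\infty}^{+}$; so it suffices to produce an admissible $\succsim$ whose $P$ fails convexity. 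Note that A.2 already forces $P$ to be a cone --- by scaling the defining relation $\mathbf{x}\succsim\mathbf{0}$ --- and A.4(a) forces $\ell_{\infty}^{+}\subseteq P$, so convexity is the only property left to attack.

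The candidate I would try is a preference represented by a monotone, norm-continuous, positively homogeneous functional $U$ of ``niveloid'' type that evaluates a bounded sequence through its tail behaviour. For such a $U$ the verifications should be routine: A.1 holds because $U$ is a numerical representation; A.3 holds because $U$ is $1$-Lipschitz for the supremum norm, so both contour sets are norm-closed; A.4(a)--(b) hold because the aggregation is monotone and responds strictly to a uniform shift $\mathbf{x}\mapsto\mathbf{x}+\epsilon\mathbf{1}$; and A.2 holds via the identity $U(\alpha\mathbf{x}+(1-\alpha)h\mathbf{1})=\alpha U(\mathbf{x})+(1-\alpha)h$, which leaves the ordering unchanged under constant mixtures. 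To break A.6 one then displays two positions $\mathbf{x},\mathbf{y}\in P$ whose average leaves $P$ --- for instance $\mathbf{x}$ equal to $0$ on the odd coordinates and to $-2$ on the even coordinates, and $\mathbf{y}$ the mirror sequence, so that $\tfrac{1}{2}\mathbf{x}+\tfrac{1}{2}\mathbf{y}=-\mathbf{1}\notin P$ --- whence $P$ is non-convex and A.6 fails.

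I expect A.5 (uncertainty aversion) to be the main obstacle. It is a conditional requirement, triggered only by a genuine indifference $\mathbf{x}\sim\mathbf{y}$, and one must check that for the chosen $\succsim$ every indifferent pair still satisfies $\alpha\mathbf{x}+(1-\alpha)\mathbf{y}\succsim\mathbf{x}$, even though $P$ is globally non-convex. This is exactly where the interplay between A.5 and the cone structure coming from A.2 tends to push $P$ back into the A.6 shape, so the aggregation underlying $U$ has to be arranged to be concave along indifference curves while remaining globally non-concave. Pinning that down --- rather than the other five checks or the choice of the violating pair --- is where the real work of the argument sits; once A.5 is confirmed for one such $\succsim$, non-redundancy follows, since a single counterexample defeats the implication.
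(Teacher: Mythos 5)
There is a genuine gap, and it is not the one you hope to close: your strategy cannot work. You propose to refute ``A.1--A.5 $\Rightarrow$ A.6'' by exhibiting preferences satisfying A.1--A.5 whose set $P=\{\mathbf{x}\in\ell_{\infty}:\mathbf{x}\succsim\mathbf{0}\}$ is non-convex, and you correctly sense that A.5 is the obstacle --- but that obstacle is fatal, not technical. Step 2 of the paper's proof of Theorem 1 (essentially Gilboa--Schmeidler's representation, their Lemma 3.5) shows that A.1--A.5 alone already force the representing functional $U$ to be monotone, translation-equivariant, positively homogeneous and superadditive: certainty-independence (A.2) upgrades the quasi-concavity along indifference curves coming from A.5 to full superadditivity (this is exactly case $(b)$ in Step 2), hence $U$ is concave and $U(\mathbf{x})=\min\{\mathbb{E}_{\pi}[\mathbf{x}]:\pi\in C\}$ for some non-empty closed convex set $C$ of finitely additive priors. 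Consequently $P=\{\mathbf{x}:\mathbb{E}_{\pi}[\mathbf{x}]\geq 0\ \text{for all }\pi\in C\}$ is \emph{automatically} a closed convex cone of exactly the A.6 shape, so no admissible $\succsim$ has a non-convex $P$, and the stronger statement you aim at (``$P$ is not of the A.6 form for \emph{any} closed convex $D$'') is false. Concretely, for your candidate pair ($\mathbf{x}$ equal to $0$ on odd and $-2$ on even coordinates, $\mathbf{y}$ its mirror, average $-\mathbf{1}$), any tail functional of $\limsup$ type that accepts both positions is subadditive and violates A.5 precisely at that (by symmetry, indifferent) pair; the hoped-for ``concave along indifference curves but globally non-concave'' aggregation has no room to exist.

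The repair is to read non-redundancy the way the paper does: A.6 is stated relative to a \emph{given} ambiguity set $D$, so it suffices to produce preferences satisfying A.1--A.5 whose induced prior set differs from that prescribed $D$ --- indeterminacy of the prior set, not failure of convexity, is the point. The paper does this with the simplest member of the Gilboa--Schmeidler family: fix a finitely additive probability $\hat{\pi}\notin D$ and let $U(\mathbf{x})=\mathbb{E}_{\hat{\pi}}[\mathbf{x}]$. Then A.1--A.5 hold (it is the representation with singleton prior set $\{\hat{\pi}\}$), while the upper contour set of $\mathbf{0}$ is the half-space $\{\mathbf{x}:\mathbb{E}_{\hat{\pi}}[\mathbf{x}]\geq 0\}$, which contains some $\hat{\mathbf{x}}\notin P$ because $\hat{\pi}\notin D$; hence A.6 with that $D$ fails. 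Your verifications of A.1--A.4 and the cone/monotonicity observations are fine as far as they go, but the argument must be reoriented along these lines before they matter.
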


\begin{proof}
Pick any finitely additive probability measure $\hat{\pi}$ which does not belong to $D$, and define $U:\ell _{\infty}\rightarrow \mathbb{R}$ by $U(\mathbf{x}):=\mathbb{E}_{\hat{\pi}}\left[\mathbf{x}\right]$. 
Next, define the binary relation $\succsim $ on $\ell_{\infty}$ as follows: for all $\mathbf{x},\mathbf{y}\in \ell_{\infty}$, $\mathbf{x}\succsim \mathbf{y}$ if and only if $U(\mathbf{x})\geq U(\mathbf{y})$. 
It is easy to show that $\succsim$ satisfies axioms A.1 through A.5. 
Therefore, to finish the proof we must show that A.6 fails to hold for the above-defined relation $\succsim$. 
To this end, it is enough to consider any vector $\mathbf{\hat{x}}\in \ell_{\infty}$ such that $\mathbf{\hat{x}}\notin P$ but $\mathbb{E}_{\hat{\pi}}\left[ \mathbf{\hat{x}}\right] \geq 0$. 
Such a vector exists because by assumption $\hat{\pi}\notin D$ and $D$ is a subset of the set of all finitely additive probability measures.
\end{proof}

\subsection{The main result\protect\medskip}

\noindent The proof of our main result is relegated to the appendix. Our
approach to the proof of the $(2)\Rightarrow (1)$ part of Theorem 1 is novel
and relies on some known facts about coherent risk measures (put together in
section $3$). The first half (at least) of the proof of the $(1)\Rightarrow
(2)$\emph{\ }part of Theorem 1 by and large mimics the proofs of Lemmas 3.2
and 3.3 in Gilboa and Schmeidler (1989). The final part is original and
relies, again, on some known properties of coherent risk measures in Banach
spaces.\smallskip

\begin{theorem}
\textit{Let} $\succsim$\textit{\ be a binary (preference) relation on }$\ell _{\infty }$\textit{. Let }$D$ \textit{be a
closed and convex subset of} \textit{finitely additive probability measures
on }$\left( \mathbb{N},2^{\mathbb{N}}\right) $. \textit{Then, the following
conditions are equivalent:\smallskip }

\noindent $(1)$ $\succsim$ \textit{satisfies axioms A.1-A.6.}

\noindent $(2)$\textit{There exists a function} $U:\ell_{\infty}\rightarrow \mathbb{R}$\textit{, given by }$U\left( \mathbf{x}\right) =\min \left\{ \mathbb{E}_{\pi }\left[ \mathbf{x}\right] :\pi \in D\right\} $, for all $\mathbf{x}\in \ell _{\infty }$, \textit{which is such that}
\begin{equation*}
\mathbf{x}\succsim \mathbf{y}\ \text{\textit{if and only if} }U(\mathbf{x})\geq U(\mathbf{y})\text{, \textit{for all}}\mathbf{x},\mathbf{y}\in \ell_{\infty}.
\end{equation*}%
\end{theorem}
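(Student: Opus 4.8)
The plan is to prove the two implications separately: $(2)\Rightarrow(1)$ is a routine verification, while $(1)\Rightarrow(2)$ is the substance and would be organised as a Gilboa--Schmeidler-style construction of the representing utility followed by an identification step resting on the coherent-risk-measure facts of Section~3. For $(2)\Rightarrow(1)$, suppose $U(\mathbf{x})=\min\{\mathbb{E}_{\pi}[\mathbf{x}]:\pi\in D\}$ represents $\succsim$. Viewed inside the topological dual of $\ell_{\infty}$, $D$ is a weak-$\ast$ closed set of positive functionals normalised at $\mathbf{1}$, hence lies in the unit ball and is weak-$\ast$ compact by Banach--Alaoglu, so the minimum is always attained and $\pi\mapsto\mathbb{E}_{\pi}[\mathbf{x}]$ is weak-$\ast$ continuous. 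As a pointwise minimum of such a family, $U$ is concave, positively homogeneous, constant-additive (since $\mathbb{E}_{\pi}[h\mathbf{1}]=h$, one has $U(\mathbf{x}+h\mathbf{1})=U(\mathbf{x})+h$), monotone for $\ell_{\infty}^{+}$, and $1$-Lipschitz for the sup-norm; equivalently, $-U$ coincides with the coherent risk measure $\rho_{crm}$ which Lemma~1 attaches to the order unit $\mathbf{1}$ and the cone $P$ of A.6 (which is norm-closed as an intersection of closed half-spaces, is a convex cone containing $\ell_{\infty}^{+}$, and has $\mathbf{1}$ as an order unit because $\|\mathbf{x}\|_{\infty}\mathbf{1}-\mathbf{x}\in\ell_{\infty}^{+}\subseteq P$), because $\inf\{t:\mathbf{x}+t\mathbf{1}\in P\}=-\min_{\pi\in D}\mathbb{E}_{\pi}[\mathbf{x}]$. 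From these properties every axiom follows: A.1 since $U$ is real-valued; A.2 since $U(\alpha\mathbf{x}+(1-\alpha)h\mathbf{1})=\alpha U(\mathbf{x})+(1-\alpha)h$; A.3 from the Lipschitz bound (upper and lower contour sets are preimages of closed half-lines under the continuous function $U$); A.4(a) from monotonicity of $U$, and A.4(b) because $x_{n}\geq y_{n}+\epsilon$ for all $n$ forces $U(\mathbf{x})\geq U(\mathbf{y})+\epsilon>U(\mathbf{y})$ (alternatively, $\mathbf{x}-\mathbf{y}$ is then interior to $\ell_{\infty}^{+}\subseteq P$, so $\rho_{crm}(\mathbf{x}-\mathbf{y})<0$ by Lemma~2 and subadditivity finishes it); A.5 from concavity; and A.6 because $\{\mathbf{x}:\mathbf{x}\succsim\mathbf{0}\}=\{\mathbf{x}:U(\mathbf{x})\geq0\}=\{\mathbf{x}:\mathbb{E}_{\pi}[\mathbf{x}]\geq0\ \text{for all}\ \pi\in D\}=P$.

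\medskip
\noindent\textbf{Direction $(1)\Rightarrow(2)$, Stage~1: the certainty equivalent.} Here I would retrace Lemmas~3.2 and 3.3 of Gilboa and Schmeidler in the $\ell_{\infty}$ setting. By A.4(b), $h\mathbf{1}\succ h'\mathbf{1}$ whenever $h>h'$. Fix $\mathbf{x}$ and put $M=\|\mathbf{x}\|_{\infty}$; by A.4(a), $M\mathbf{1}\succsim\mathbf{x}\succsim-M\mathbf{1}$, and by A.1 the sets $\{h\in\mathbb{R}:h\mathbf{1}\succsim\mathbf{x}\}$ and $\{h\in\mathbb{R}:\mathbf{x}\succsim h\mathbf{1}\}$ cover $\mathbb{R}$; they are closed by A.3 (continuity of $h\mapsto h\mathbf{1}$) and are respectively an up-set and a down-set by A.1 together with A.4(b), so they meet in exactly one point, which I define to be $U(\mathbf{x})$; thus $\mathbf{x}\sim U(\mathbf{x})\mathbf{1}$ and, by transitivity, $U$ represents $\succsim$. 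Putting $h=0$ in A.2 gives $\mathbf{x}\sim\mathbf{y}\Leftrightarrow\alpha\mathbf{x}\sim\alpha\mathbf{y}$, so $U(\alpha\mathbf{x})=\alpha U(\mathbf{x})$ for $\alpha\in(0,1)$, and a rescaling extends this to all $\lambda\geq0$. Applying A.2 with $\alpha=\tfrac12$ and constant sequence $h\mathbf{1}$ to the indifferent pair $\mathbf{x},U(\mathbf{x})\mathbf{1}$ yields $U(\tfrac12\mathbf{x}+\tfrac12 h\mathbf{1})=\tfrac12(U(\mathbf{x})+h)$, which with homogeneity gives constant-additivity $U(\mathbf{x}+h\mathbf{1})=U(\mathbf{x})+h$. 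A.5 gives $U(\alpha\mathbf{x}+(1-\alpha)\mathbf{y})\geq U(\mathbf{x})$ whenever $U(\mathbf{x})=U(\mathbf{y})$, and shifting the smaller of the two values by a constant reduces the general inequality to this case, so $U$ is concave and hence, with homogeneity, superadditive. Finally A.4(a) makes $U$ monotone for $\ell_{\infty}^{+}$. Consequently $\rho:=-U$ satisfies properties (1)--(4) of Section~3 relative to the cone $\ell_{\infty}^{+}$ and the order unit $\mathbf{1}$: it is a coherent risk measure.

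\medskip
\noindent\textbf{Stage~2: identifying $U$; the main obstacle.} The acceptance set of $\rho=-U$ is $\mathcal{A}_{\rho}=\{\mathbf{x}:U(\mathbf{x})\geq0\}=\{\mathbf{x}:\mathbf{x}\succsim\mathbf{0}\}$, which by A.6 equals $P=\{\mathbf{x}:\mathbb{E}_{\pi}[\mathbf{x}]\geq0\ \text{for all}\ \pi\in D\}$. Feeding this into the reconstruction formula (3) for a coherent risk measure gives
\[
-U(\mathbf{x})=\rho(\mathbf{x})=\inf\{t\in\mathbb{R}:\mathbf{x}+t\mathbf{1}\in\mathcal{A}_{\rho}\}=\inf\{t\in\mathbb{R}:\mathbf{x}+t\mathbf{1}\in P\},
\]
and since $\mathbf{x}+t\mathbf{1}\in P$ holds precisely when $t\geq-\mathbb{E}_{\pi}[\mathbf{x}]$ for every $\pi\in D$, the right-hand side equals $-\inf_{\pi\in D}\mathbb{E}_{\pi}[\mathbf{x}]$; since $D$ is weak-$\ast$ compact this infimum is attained, so $U(\mathbf{x})=\min\{\mathbb{E}_{\pi}[\mathbf{x}]:\pi\in D\}$, and this $U$ represents $\succsim$ by Stage~1. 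I expect the hard part to be Stage~1: transcribing the Gilboa--Schmeidler argument faithfully, in particular the order-and-closedness step that makes the certainty equivalent well defined, the passage from scalars in $(0,1)$ to all $\lambda\geq0$, and the reduction of concavity to the indifferent case. By contrast Stage~2 is short once $-U$ is known to be a coherent risk measure, the only delicate point being to read ``$D$ closed'' as weak-$\ast$ closed, so that Banach--Alaoglu makes the infimum an attained minimum (which is also what licenses writing $\min$ in the statement).
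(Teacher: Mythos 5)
Your proposal is correct and follows essentially the same route as the paper: for $(2)\Rightarrow(1)$ you verify the coherent-risk-measure properties of $-U$ and read off the axioms, and for $(1)\Rightarrow(2)$ you construct the certainty equivalent in Gilboa--Schmeidler style, establish monotonicity, positive homogeneity, constant-additivity and superadditivity of $U$, and then combine the acceptance-set formula $(3)$ with axiom A.6 to obtain $U(\mathbf{x})=\min \left\{ \mathbb{E}_{\pi }\left[ \mathbf{x}\right] :\pi \in D\right\}$. The only deviations are minor: you bypass the paper's appeal to Gilboa--Schmeidler's Lemma 3.5 and the uniqueness of the ambiguity set (a step the paper's own Step 3 renders redundant), you make the attainment of the minimum explicit via weak-$\ast$ compactness of $D$, and you check A.4(b) directly from monotonicity plus constant-additivity rather than through Lemma 2.
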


\begin{proof}
See the appendix.
\end{proof}

\begin{remark}
\emph{Note that $-U$, where $U$ is the utility function in Theorem 1, is a special case of SR (i.e., a utility-based
shortfall risk measure). 
To see this, following Guo and Xu (2019)%
\footnote{See page 475 therein.} 
let $l:\mathbb{R}\rightarrow \mathbb{R}$ be a convex, increasing and non-constant function, and let $\lambda $ be a given constant in the interior of the range of $l$. 
Let $\mathbb{P}$ be a finitely additive probability measure on $\left(\mathbb{N}, 2^{\mathbb{N}}\right) $. 
Let $\mathbf{x}\in \ell _{\infty }$ be any financial position, i.e., a bounded random variable $\mathbf{x:}\left(\mathbb{N},2^{\mathbb{N}},\mathbb{P}\right) \rightarrow \mathbb{R}$. 
The SR of a financial position $\mathbf{x}\in \ell _{\infty }$ is defined as:}
\begin{equation*}
SR_{l,\lambda }^{\mathbb{P}}\left( \mathbf{x}\right) :=\inf \left\{ t\in \mathbb{R}:\mathbf{x+}t\mathbf{1\in }\mathcal{A}_{\mathbb{P}}\right\},
\end{equation*}%
\emph{where $\mathcal{A}_{\mathbb{P}}:=\left\{ \mathbf{x:}\left(\mathbb{N}, 2^{\mathbb{N}},\mathbb{P}\right) \rightarrow \mathbb{R}:\mathbf{x} \; \text{is bounded and }\mathbb{E}_{\mathbb{P}}\left[ l\left( -\mathbf{x}\left(n\right) \right) \right] \leq \lambda \right\} $. 
Now, set $l\left( x\right) =x$, $\lambda =0$, and $D=\left\{ \mathbb{P}\right\} $. 
Then, one can readily see that $\mathcal{A}_{\mathbb{P}}=P$.%
\footnote{Refer to the above axiom A.6.} 
Therefore, it follows from the formula displayed above and the proof of Theorem 1 that $SR_{l,\lambda }^{\mathbb{P}}\left( \mathbf{x}\right) =-U\left( \mathbf{x}\right) $. 
Similarly, it would be easy to verify that if $D$ is not a singleton $-U$ is equal to a distributionally robust SR.%
\footnote{See page 476 in Guo and Xu (2019) for the definition of distributionally robust SR.}}
\end{remark}

\begin{remark}
\emph{Within an expected-utility framework, where the
decision-maker's von Neumann-Morgenstern utility is a function of a control
variable and an exogenous random variable, Kimball (1990) develops a theory
of the optimal response of decision variables to risk, and, in particular, a
theory of precautionary saving. The author defines the concept of `prudence'
as a measure of the sensitivity of the optimal choice of a decision variable
to risk, and analyzes the determination of precautionary saving as an
example of the effect of risk on a decision variable. Since an increase in
precautionary saving, as a result of an increase in risk (i.e., a prudent
agent), may be interpreted as an increase in demand for `insurance', one may
say that the greater the demand for `insurance' in response to risk, the
more prudent the decision-maker. With this in mind, though our framework is
outside of the expected utility paradigm\footnote{%
So, it is not accurate to directly relate our results to Kimball's.}, we can
nonetheless draw an analogy between our risk-perception construct and
Kimball's notion of prudence. In fact, we show next that in our setting an
individual with maximum risk-perception is prudent, in the sense that such
decision-maker wants to be `fully insured', i.e., she demands a risk-free
(constant) financial position.}
\end{remark}

\begin{claim}
\textit{Consider a decision-maker with maximum risk-perception, i.e., $D$ is the set of all finitely additive probability measures on $\left( \mathbb{N},2^{\mathbb{N}}\right)$.
Suppose that the solution set of program $(1)$ is non-empty. 
Then, there exists a real number $\alpha \geq 0$ such that $\mathbf{c}=\alpha \mathbf{1}$ is a solution to program $(1)$.}
\end{claim}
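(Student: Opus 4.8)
The plan is to invoke Theorem~1 to replace the abstract preference $\succsim$ by the explicit utility $U(\mathbf{x})=\min\{\mathbb{E}_{\pi}[\mathbf{x}]:\pi\in D\}$ and then to exhibit a constant financial position that is simultaneously feasible and utility-maximal. First I would record three elementary facts about $U$ in the maximal-ambiguity case, i.e. when $D$ is the \emph{entire} set of finitely additive probability measures on $(\mathbb{N},2^{\mathbb{N}})$: (i) since $\mathbb{E}_{\pi}[\mathbf{1}]=1$ for every probability measure $\pi$, one has $U(c\mathbf{1})=c$ for every real number $c$ (so the quantity $\alpha$ produced below is a genuine real number); (ii) since every Dirac measure $\delta_{n}$ lies in $D$, one has $U(\mathbf{x})\le \mathbb{E}_{\delta_{n}}[\mathbf{x}]=x_{n}$ for each $n$; and (iii) since each $\pi\in D$ is a positive functional, $\mathbf{x}\in X_{+}$ implies $\mathbb{E}_{\pi}[\mathbf{x}]\ge 0$ for all $\pi\in D$, hence $U(\mathbf{x})\ge 0$. (In fact (ii) and positivity together give $U(\mathbf{x})=\inf_{n}x_{n}$, but only the one-sided bounds are needed.)

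Next I would pick a solution $\mathbf{c}^{\ast}$ of program $(1)$, which exists by hypothesis, and set $\alpha:=U(\mathbf{c}^{\ast})$. By fact (iii) together with $\mathbf{c}^{\ast}\in X_{+}$ we get $\alpha\ge 0$, so that $\alpha\mathbf{1}\in X_{+}$; and by fact (ii), $\alpha=U(\mathbf{c}^{\ast})\le c^{\ast}_{n}$ for every $n\in\mathbb{N}$, which is to say $\alpha\mathbf{1}\le \mathbf{c}^{\ast}$ in the pointwise order of $\ell_{\infty}$.

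Finally I would invoke the structural property of the feasible set of program $(1)$ recorded (without proof) at the end of Section~2: if $\mathbf{c}_{1}\le\mathbf{c}_{2}$ and $\mathbf{c}_{2}$ is feasible, then $\mathbf{c}_{1}$ is feasible. Applying this with $\mathbf{c}_{1}=\alpha\mathbf{1}$ and $\mathbf{c}_{2}=\mathbf{c}^{\ast}$ shows that $\alpha\mathbf{1}$ is feasible for program $(1)$. By fact (i), $U(\alpha\mathbf{1})=\alpha=U(\mathbf{c}^{\ast})$, which equals the optimal value of program $(1)$; hence $\alpha\mathbf{1}$ is itself a solution, with $\alpha\ge 0$ as required.

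I do not expect a genuine obstacle here: once Theorem~1 is available the argument is a short verification. The only points needing a little care are the identification of $U$ with $\min\{\mathbb{E}_{\pi}[\mathbf{x}]:\pi\in D\}$ for the maximal ambiguity set — which is precisely the content of Theorem~1 — and the attainment of that minimum, which is guaranteed by the weak$^{*}$ compactness of $D$ (Banach--Alaoglu) that is already implicit in the statement; the conceptual work, namely that the feasible set of program $(1)$ is downward closed, has likewise already been carried out in Section~2.
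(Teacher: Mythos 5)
Your proposal is correct and follows essentially the same route as the paper: identify the utility under maximal ambiguity, set $\alpha$ equal to the optimal value (which equals $\inf_{n}c^{\ast}_{n}$), use the downward-closedness of the feasible set to make $\alpha\mathbf{1}$ feasible, and conclude by equality of utilities. The only cosmetic difference is that you obtain the needed bounds $U(\mathbf{x})\le x_{n}$ and $U(\mathbf{x})\ge 0$ directly from Dirac measures and positivity in the representation $U(\mathbf{x})=\min\{\mathbb{E}_{\pi}[\mathbf{x}]:\pi\in D\}$, whereas the paper explicitly computes $U(\mathbf{x})=\inf_{n}x_{n}$ from the acceptance-set formula $-U(\mathbf{x})=\inf\{t:\mathbf{x}+t\mathbf{1}\in\ell_{\infty}^{+}\}$.
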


\begin{proof}
Consider axiom A.6 and assume that $D$ is the set of all finitely additive probability measures on $\left( \mathbb{N}, 2^{\mathbb{N}}\right)$. 
Then, it should be clear that $P=\ell _{\infty }^{+}$.
Consequently, it follows from the proof of Theorem 1 (see the Appendix) that, for all $\mathbf{x}\in \ell _{\infty}$,
\begin{align*}
-U\left(\mathbf{x}\right) &=\inf \left\{ t\in \mathbb{R}:\mathbf{x}+t \mathbf{1}\in \ell _{\infty }^{+}\right\} =\inf \{t\in \mathbb{R}:x_{n}+t\geq 0\;\text{for every}\; n\} \\ 
&=\inf \{t\in \mathbb{R}:t\geq -x_{n}\;\text{for every}\;n\}\\ 
&=\sup_{n}\{(-x_{n})\}=-\inf_{n}\{(x_{n})\}.
\end{align*}%

\noindent This shows that the utility function is a Rawlsian function. Next,
suppose that there is a $\mathbf{c}^{\ast }\in \ell _{\infty }^{+}$ such
that $\inf \left\{ q\left( h\right) :h\in \mathbb{R}^{J}\text{ and }\mathbf{c%
}^{\ast }\leq \omega +R\left( h\right) \right\} \leq w_{0}$ and $%
\inf_{n}\{(c_{n}^{\ast })\}\geq \inf_{n}\left\{ (z_{n})\right\} $ for all
feasible $\mathbf{z}$. Obviously, $\inf_{n}\{(c_{n}^{\ast })\}$ is a
non-negative real number. So, let $\inf_{n}\{(c_{n}^{\ast })\}=\alpha \geq 0$
and define $\mathbf{c}:=\alpha \mathbf{1}$. Note that $\mathbf{c}\in \ell
_{\infty }^{+}$ and, by construction, $\mathbf{c}\leq \mathbf{c}^{\ast }$.
Therefore, also $\mathbf{c}$ is feasible for program $(1)$.\footnote{%
See the comments below $(2)$.} Furthermore, $U\left( \mathbf{c}\right)
=\alpha =U\left( \mathbf{c}^{\ast }\right) $. This establishes that the
constant sequence $\alpha \mathbf{1}$ is a solution to program $(1)$.
\end{proof}

\section{Concluding remarks\protect\medskip}

\noindent In this paper we have focused on a single decision-maker and
addressed the following question: how can a personality trait like
risk-perception be formalized, and how should preferences look like in order
for the traditional utility-maximization objective to be aligned with the
risk-minimization objective, where risk is measured by a known risk measure
that allows for various degrees of risk-perception? Our Theorem 1 tackles
this question by developing an axiomatic foundation of preferences for which
utility-maximization is equivalent to minimizing a utility-based shortfall
risk measure. Our axiomatization hinges on a novel axiom in decision theory,
namely the risk-perception axiom, and adapts (and expands) the system of
axioms in Gilboa and Schmeidler (1989) to preference relations defined on
financial positions or contingent payoffs, where the space of financial
positions is the set of bounded real-valued sequences. Our model and main
result, Theorem 1, may be framed quite naturally also in the space of
financial positions given by the set of continuous real-valued functions on
the closed unit interval.

\section*{Appendix}

\begin{proof} 
[Proof of Theorem 1]

\noindent $(2)\Rightarrow (1)$:

\noindent Let $U\left( \mathbf{x}\right) =\min \left\{ \mathbb{E}_{\pi }%
\left[ \mathbf{x}\right] :\pi \in D\right\} $, for all $\mathbf{x}\in \ell
_{\infty }$, and define the binary (preference) relation $\succsim $ on%
\textit{\ }$\ell _{\infty }$ by $\mathbf{x}\succsim \mathbf{y}\ $if and only
if $U(\mathbf{x})\geq U(\mathbf{y})$, for all $\mathbf{x},\mathbf{y}\in \ell
_{\infty }$. Since $U$ is real-valued and $D$ is closed, we have
that\smallskip

\begin{equation}
\min \left\{ \mathbb{E}_{\pi }\left[ \mathbf{x}\right] :\pi \in D\right\}
=\inf \left\{ \mathbb{E}_{\pi }\left[ \mathbf{x}\right] :\pi \in D\right\} ,%
\text{ for all }\mathbf{x}\in \ell _{\infty \text{.}}\text{.}  \tag{5}
\end{equation}%
\smallskip

\noindent Using $(5)$, some straightforward algebra shows the
following:\smallskip

\begin{equation*}
-U\left( \mathbf{x}\right) =-\inf \left\{ \mathbb{E}_{\pi }\left[ \mathbf{x}%
\right] :\pi \in D\right\} =\inf \left\{ t\in \mathbb{R}:\mathbf{x}+t\mathbf{%
1}\in P\right\} \text{,}
\end{equation*}%
\smallskip

\noindent where $P=\left\{ \mathbf{x}\in \ell _{\infty }:\mathbb{E}_{\pi }%
\left[ \mathbf{x}\right] \geq 0\text{ for all }\pi \in D\right\} .$ Since $P$
is a closed cone containing $\ell _{\infty }^{+}$, and $\mathbf{1\in }P$ is
an order unit\footnote{%
With respect to both $\ell _{\infty }^{+}$ and $P$.}, by Lemma 1 and the
above equality we have that $-U:\ell _{\infty }\rightarrow \mathbb{R}$ is a
coherent risk measure, with respect to $\ell _{\infty }^{+}$\ and the vector 
$\mathbf{1}$, and $\mathcal{A}_{-U}=P$. Therefore, it follows from the four
defining properties of coherent risk measures (in section $3$) and the
definition of acceptance set that\smallskip

\begin{equation*}
\begin{tabular}{l}
$\mathbf{x}-\mathbf{y\in }\ell _{\infty }^{+}$ implies $U\left( \mathbf{x}%
\right) \geq U\left( \mathbf{y}\right) $. \ $(i)$\smallskip \\ 
$U\left( \mathbf{x+}t\mathbf{1}\right) =U\left( \mathbf{x}\right) +t$, for
all $\mathbf{x}\in \ell _{\infty }$ and $t\in \mathbb{R}$. \ $(ii)$\smallskip
\\ 
$U\left( \mathbf{x+y}\right) \geq U\left( \mathbf{x}\right) +U\left( \mathbf{%
y}\right) $, for all $\mathbf{x},\mathbf{y}\in \ell _{\infty }$. \ $(iii)$%
\smallskip\  \\ 
$U\left( \lambda \mathbf{x}\right) =\lambda U\left( \mathbf{x}\right) $, for
all $\mathbf{x}\in \ell _{\infty }$ and $\lambda \geq 0$. \ $(iv)$\smallskip
\\ 
$\left\{ \mathbf{x}\in \ell _{\infty }:U\left( \mathbf{x}\right) \geq
0\right\} =A_{-U}=P$ \ $(v)$.%
\end{tabular}%
\end{equation*}%
Of course, $\succsim $ satisfies axiom A.1. Using the above
properties $(ii)$ and $(iv)$ it is very easy to see that $\succsim $
satisfies axiom A.2. By the above properties $(iii)$ and $(iv)$, $-U$ is a
sublinear function and one can readily see that it is bounded on every
neighborhood of zero. Hence, $U$ is continuous in the supremum norm-topology%
\footnote{%
See, e.g., Lemma 5.51 in Aliprantis and Border (2006).}. In turn, continuity
of $U$ easily implies that $\succsim $ satisfies axiom A.3. That $\succsim $
satisfies axiom A.4. part $(a)$ is an immediate consequence of property $(i)$%
. As to part $(b)$ of axiom A.4., suppose that there is some $\epsilon >0$
such that $x_{n}-y_{n}\geq \epsilon $ for every $n\in \mathbb{N}$. Then, $%
\mathbf{x}-\mathbf{y}$ lies in the norm-interior of $\ell _{\infty }^{+}$.
Since $P\supseteq \ell _{\infty }^{+}$, it also belongs to the norm-interior
of $P$. Therefore, since $-U\left( \mathbf{x}\right) =\inf \left\{ t\in 
\mathbb{R}:\mathbf{x}+t\mathbf{1}\in P\right\} $ (see above) it follows from
Lemma 2 that $U\left( \mathbf{x}-\mathbf{y}\right) >0$. On the other hand,
as an easy consequence of property $(iii)$ we get $U\left( \mathbf{x}\right)
-U\left( \mathbf{y}\right) \geq U\left( \mathbf{x}-\mathbf{y}\right) $,
which implies $U\left( \mathbf{x}\right) >U\left( \mathbf{y}\right) $, hence 
$\mathbf{x}\succ \mathbf{y}$. This establishes that $\succsim $ satisfies
part $(b)$ of axiom A.4. Next, a very straightforward application of
properties $(iii)$ and $(iv)$ shows that $\succsim $ satisfies axiom A.5.
Finally, property $(iv)$ clearly yields $U\left( \mathbf{0}\right) =0$.
Thus, $\succsim $ satisfies axiom A.6 as an obvious implication of property $%
(v)$.\smallskip

\noindent $(1)\Rightarrow (2)$: To improve readability, the proof will be
divided into three steps.

\noindent \textit{Step 1 (Existence of a utility function representing
preferences):}

\noindent Let $\succsim $\ be a preference relation on $\ell _{\infty }$
satisfying axioms A.1-A.6. Pick any arbitrary $\mathbf{x\in }\ell _{\infty }$%
, and define the following sets:
\begin{eqnarray*}
A &=&\left\{ t\in \mathbb{R}:t\mathbf{1}\succsim \mathbf{x}\right\} \text{%
\smallskip } \\
B &=&\left\{ t\in \mathbb{R}:\mathbf{x}\succsim t\mathbf{1}\right\} \text{.}
\end{eqnarray*}%
We claim that $A$ and $B$ are both non-empty. To see this, note
that, by definition of $\ell _{\infty }$, there exists some $M>0$ such that $%
-M\leq x_{n}\leq M$ for every $n\in \mathbb{N}$. Thus, by A.4 part (a) we
get $M\mathbf{1\succsim x}\succsim -M\mathbf{1}$. Therefore, $M\in A$, and $%
-M\in B$. Moreover, using axiom A.3 it's easy to check that $A$ and $B$ are
both closed. Furthermore, completeness of preferences (see axiom A.1)
implies $\mathbb{R=}A\cup B$. Thus, since $\mathbb{R}$ is connected, we must
have that $A\cap B\neq \varnothing $. This readily implies (see the above
definitions of sets $A$ and $B$) that there exists a real number $t_{\mathbf{%
x}}$ such that $t_{\mathbf{x}}\mathbf{1\sim x}$. Such a number is unique.
For, suppose, by way of obtaining a contradiction, that there is some $s_{%
\mathbf{x}}\neq t_{\mathbf{x}}$ such that $s_{\mathbf{x}}\mathbf{1\sim x}$,
and assume, without any loss of generality, that $s_{\mathbf{x}}>t_{\mathbf{x%
}}$. Then, on the one hand A.1 implies $s_{\mathbf{x}}\mathbf{1\sim }t_{%
\mathbf{x}}\mathbf{1}$, and on the other hand it follows from A.4 part (b)
that $s_{\mathbf{x}}\mathbf{1}\succ t_{\mathbf{x}}\mathbf{1}$, which yields
the desired contradiction. So, we have established the existence of a
function $U:\ell _{\infty }\rightarrow \mathbb{R}$ defined by
\begin{equation*}
\ell _{\infty }\ni \mathbf{x\mapsto }U\left( \mathbf{x}\right) :=t_{\mathbf{x%
}}\text{ with }t_{\mathbf{x}}\mathbf{1\sim x}\text{.}
\end{equation*}
Moreover, using axioms A.1 and A.4 it is a simple exercise to
verify that $\mathbf{x}\succsim \mathbf{y}$ if and only if $U\left( \mathbf{x%
}\right) \geq U\left( \mathbf{y}\right) $. This proves that the above $%
U:\ell _{\infty }\rightarrow \mathbb{R}$ is a utility function representing
the given preferences. Incidentally, observe that, by construction of the
utility function, $U\left( c\mathbf{1}\right) =c$ for every real number $c$.%
\footnote{This fact will be used again in the rest of the proof without further notice.} 

\textit{Step 2 (Showing that the utility function is a functional a la Gilboa and Schmeidler):}

\noindent We claim that axioms (A.1)-(A.5) imply the existence of a
non-empty, closed and convex set $C$ of finitely additive probability
measures on $2^{\mathbb{N}}$ such that\smallskip \emph{\ 
\begin{equation*}
U\left( \mathbf{x}\right) =\min \left\{ \mathbb{E}_{\pi }\left[ \mathbf{x}%
\right] :\pi \in C\right\} ,\text{ for all }\mathbf{x}\in \ell _{\infty }.
\end{equation*}}
To prove the claim, we begin by noting that A.4 part $(a)$
immediately implies that the utility function is monotonic, i.e., satisfies
property $(i)$ listed above. Next, we will prove that the utilty function $U$
satisfies property $(iv)$ listed above. That $(iv)$ holds true for $\lambda
=0$ and $\lambda =1$ is obvious. So, assume, to begin with, that $0<\lambda
<1$, and pick any $\mathbf{x\in }\ell _{\infty }$. Note that $U\left(
\lambda \mathbf{x}\right) =t_{\lambda \mathbf{x}}$ where $t_{\lambda \mathbf{%
x}}\mathbf{1\sim }\lambda \mathbf{x}$, and $U\left( \mathbf{x}\right) =t_{%
\mathbf{x}}$ where $t_{\mathbf{x}}\mathbf{1\sim x}$. Since $t_{\mathbf{x}}%
\mathbf{1\sim x}$ and $0<\lambda <1$, it follows from A.2 that $\lambda t_{%
\mathbf{x}}\mathbf{1\sim }\lambda \mathbf{x}$. On the other hand, we know
that $t_{\lambda \mathbf{x}}\mathbf{1\sim }\lambda \mathbf{x}$, therefore
A.1 implies $t_{\lambda \mathbf{x}}\mathbf{1\sim }\lambda t_{\mathbf{x}}%
\mathbf{1}$, hence $U\left( t_{\lambda \mathbf{x}}\mathbf{1}\right)
=t_{\lambda \mathbf{x}}=U\left( \lambda t_{\mathbf{x}}\mathbf{1}\right)
=\lambda t_{\mathbf{x}}$, which finally implies $U\left( \lambda \mathbf{x}%
\right) =\lambda U\left( \mathbf{x}\right) $. We are left with showing that
property $(iv)$ holds true also for $\lambda >1$, as follows: we know that $%
t_{\lambda \mathbf{x}}\mathbf{1\sim }\lambda \mathbf{x}$, $t_{\mathbf{x}}%
\mathbf{1\sim x}$, and that $0<\frac{1}{\lambda }<1$. Therefore, by A.2 we
get $\frac{1}{\lambda }t_{\lambda \mathbf{x}}\mathbf{1\sim x}$, so A.1
readily implies $\frac{1}{\lambda }t_{\lambda \mathbf{x}}\mathbf{1\sim }t_{%
\mathbf{x}}\mathbf{1}$. Thus,
\begin{equation*}
t_{\mathbf{x}}=U\left( t_{\mathbf{x}}\mathbf{1}\right) =U\left( \frac{1}{%
\lambda }t_{\lambda \mathbf{x}}\mathbf{1}\right) =\frac{1}{\lambda }U\left(
t_{\lambda \mathbf{x}}\mathbf{1}\right) =\frac{1}{\lambda }t_{\lambda 
\mathbf{x}}, 
\end{equation*}
which yields immediately $\lambda U\left( \mathbf{x}\right)
=U\left( \lambda \mathbf{x}\right) $. Next we show that the utility function 
$U$ enjoys property $(ii)$ listed above. Toward this end, pick any $\mathbf{x%
}\in \ell _{\infty }$ and $t\in \mathbb{R}$. 
Define
\begin{equation}
\beta =U\left( 2\mathbf{x}\right) =2U\left( \mathbf{x}\right) \text{,} 
\tag{6}
\end{equation}%
and note that $\beta =t_{2\mathbf{x}}$ with $\beta \mathbf{1\sim }2%
\mathbf{x}$. Observe that $2t\mathbf{1\in }\ell _{\infty }^{C}$. Therefore,
letting $\alpha =\frac{1}{2}$ and using A.2, we get $\frac{1}{2}2\mathbf{x+}%
\frac{1}{2}2t\mathbf{1\sim }\frac{1}{2}\beta \mathbf{1+}\frac{1}{2}2t\mathbf{%
1}$, i.e., $\mathbf{x+}t\mathbf{1\sim }(\frac{1}{2}\beta +t)\mathbf{1}$. The
previous condition and $(6)$ readily imply
\begin{equation*}
U\left( \mathbf{x+}t\mathbf{1}\right) =U\left( (\frac{1}{2}\beta +t)\mathbf{1%
}\right) =\frac{1}{2}\beta +t=U\left( \mathbf{x}\right) +t\text{,}
\end{equation*}
as was to be proven. We proceed to show that the utility function
is superadditive (property $(iii)$ listed above). To this end, a
straightforward application of property $(iv)$ reveals that it will suffice
to prove that for all $\mathbf{x}$ and $\mathbf{y}$ in $\ell _{\infty }$,
\begin{equation*}
U\left( \frac{1}{2}\mathbf{x+}\frac{1}{2}\mathbf{y}\right) \geq \frac{1}{2}%
U\left( \mathbf{x}\right) +\frac{1}{2}U\left( \mathbf{y}\right) \text{.}
\end{equation*}
To establish the above inequality, we pick any $\mathbf{x}$ and $%
\mathbf{y}$ in $\ell _{\infty }$ and distinguish three exhaustive cases.
Case $(a)$: $U\left( \mathbf{x}\right) =U\left( \mathbf{y}\right) $; case $%
(b)$: $U\left( \mathbf{x}\right) >U\left( \mathbf{y}\right) $; case $(c)$: $%
U\left( \mathbf{x}\right) <U\left( \mathbf{y}\right) $. We first deal with
case $(a)$, as follows: $U\left( \mathbf{x}\right) =U\left( \mathbf{y}%
\right) $ implies $\mathbf{x\sim y}$. Therefore, it follows from axiom A.5
that $\frac{1}{2}\mathbf{x+}\frac{1}{2}\mathbf{y\succsim x}$. Hence, $%
U\left( \frac{1}{2}\mathbf{x+}\frac{1}{2}\mathbf{y}\right) \geq U\left( 
\mathbf{x}\right) =\frac{1}{2}U\left( \mathbf{x}\right) +\frac{1}{2}U\left( 
\mathbf{y}\right) $, as was to be proven. Regarding case $(b)$, let
\begin{equation}
t=U\left( \mathbf{x}\right) -U\left( \mathbf{y}\right) \text{,}  \tag{7}
\end{equation}
and set $\mathbf{c}=\mathbf{y+}t\mathbf{1}$. By property $(ii)$
and $(7)$ above we get
\begin{equation*}
U\left( \mathbf{c}\right) =U\left( \mathbf{y+}t\mathbf{1}\right) =U\left( 
\mathbf{y}\right) +t=U\left( \mathbf{y}\right) +U\left( \mathbf{x}\right)
-U\left( \mathbf{y}\right) =U\left( \mathbf{x}\right) \text{.}
\end{equation*}
Now, using again property $(ii)$ we see that
\begin{equation}
U\left( \frac{1}{2}\mathbf{x+}\frac{1}{2}\mathbf{c}\right) =U\left( \frac{1}{%
2}\mathbf{x+}\frac{1}{2}\mathbf{y+}\frac{1}{2}t\mathbf{1}\right) =U\left( 
\frac{1}{2}\mathbf{x+}\frac{1}{2}\mathbf{y}\right) +\frac{1}{2}t\text{.} 
\tag{8}
\end{equation}%
\smallskip

\noindent On the other hand, since $U\left( \mathbf{c}\right) =U\left( 
\mathbf{x}\right) $ we can rely on the previous case $(a)$ and invoke
property $(ii)$ together with $(8)$ above to conclude:\smallskip

\begin{equation*}
\begin{tabular}{l}
$U\left( \frac{1}{2}\mathbf{x+}\frac{1}{2}\mathbf{y}\right) +\frac{1}{2}%
t=U\left( \frac{1}{2}\mathbf{x+}\frac{1}{2}\mathbf{c}\right) \geq \frac{1}{2}%
U\left( \mathbf{x}\right) +\frac{1}{2}U\left( \mathbf{c}\right) $\smallskip
\\ 
$=\frac{1}{2}U\left( \mathbf{x}\right) +\frac{1}{2}U\left( \mathbf{y+}t%
\mathbf{1}\right) =\frac{1}{2}U\left( \mathbf{x}\right) +\frac{1}{2}\left[
U\left( \mathbf{y}\right) +t\right] =\frac{1}{2}U\left( \mathbf{x}\right) +%
\frac{1}{2}U\left( \mathbf{y}\right) +\frac{1}{2}t$,%
\end{tabular}%
\end{equation*}

\noindent hence $U\left( \frac{1}{2}\mathbf{x+}\frac{1}{2}\mathbf{y}\right)
\geq \frac{1}{2}U\left( \mathbf{x}\right) +\frac{1}{2}U\left( \mathbf{y}%
\right) $, as was to be proven. As for case $(c)$, it turns out it can be
handled very similarly to case $(b)$.\smallskip

\noindent So far we have proven that axioms (A.1)-(A.5) give rise to a
utility function $U$, representing preferences, which satisfies properties $%
(i)-(iv)$ and which is such that $U\left( \mathbf{1}\right) =1$.\footnote{%
As such, $-U$ is a coherent risk measure (satisfying $U\left( \mathbf{1}%
\right) =1$) defined on $\ell _{\infty }$ ordered by $\ell _{\infty }^{+}$
and with $\mathbf{1}$ as an order unit.}\emph{\ }Therefore, by Lemma 3.5 in
Gilboa and Schmeidler (1989) there\emph{\ }exists a non-empty, closed and
convex set $C$ of finitely additive probability measures on $2^{\mathbb{N}}$%
such that\smallskip \emph{\ 
\begin{equation}
U\left( \mathbf{x}\right) =\min \left\{ \mathbb{E}_{\pi }\left[ \mathbf{x}%
\right] :\pi \in C\right\} ,\text{ for all }\mathbf{x}\in \ell _{\infty }. 
\tag{9}
\end{equation}%
\smallskip }

\noindent \textit{Step 3 (Pinpointing the ambiguity set }$C$\textit{%
):\smallskip }

\noindent In what follows we show that the risk-perception axiom A.6 pins
down the ambiguity set (i.e., $C=D$), hence the exact form of the utility
function $U$.\smallskip

\noindent We know from Step 2 that $-U$ is a coherent risk measure on $\ell
_{\infty }$, when the latter is ordered by $\ell _{\infty }^{+}$ and $%
\mathbf{1}$ is an order unit. Therefore, by $(3)$ above we get $-U\left( 
\mathbf{x}\right) =\inf \left\{ t\in \mathbb{R}:\mathbf{x}+t\mathbf{1}\in 
\mathcal{A}_{-U}\right\} $ for all $\mathbf{x\in }\ell _{\infty }$. Also, it
readily follows from the definition of acceptance set and axiom A.6 that $%
\mathcal{A}_{-U}=P=\left\{ \mathbf{x}\in \ell _{\infty }:\mathbb{E}_{\pi }%
\left[ \mathbf{x}\right] \geq 0\text{ for all }\pi \in D\right\} $.
Therefore,\smallskip

\begin{equation}
-U\left( \mathbf{x}\right) =\inf \left\{ t\in \mathbb{R}:\mathbf{x}+t\mathbf{%
1}\in P\right\} \text{.}  \tag{10}
\end{equation}%
\smallskip

\noindent Performing some straightforward algebra in $(10)$ we get $U\left( 
\mathbf{x}\right) =\inf \left\{ \mathbb{E}_{\pi }\left[ \mathbf{x}\right]
:\pi \in D\right\} ,$ for all $\mathbf{x}\in \ell _{\infty }$. Since every $%
\pi \in D$ is a probability measure and any $\mathbf{x}\in \ell _{\infty }$
is bounded, the $\inf $ in the previous equation is a real number.
Furthermore, because $D$ is closed, by assumption, such infimum is actually
attained. Therefore, we come to:\smallskip

\begin{equation}
U\left( \mathbf{x}\right) =\min \left\{ \mathbb{E}_{\pi }\left[ \mathbf{x}%
\right] :\pi \in D\right\} ,\text{for all }\mathbf{x}\in \ell _{\infty }. 
\tag{11}
\end{equation}%
\smallskip

\noindent Finally, one can readily see that preferences satisfying axiom
(A.6) are non-degenerate\footnote{%
For the concept of non-degeneracy and its implications for uniqueness of the
ambiguity set $C$, see pages 144-145 in Gilboa and Schmeidler (1989).}.
Thus, as in Gilboa and Schmeidler (1989) the ambiguity set $C$ in $(9)$
above must be unique. Therefore, since $D$ is by assumption closed and
convex, $(9)$ and $(11)$ imply $C=D$. The proof is now complete.
\end{proof}

\noindent \textbf{DECLARATIONS}\medskip

\noindent \textbf{Funding and/or Conflicts of interests/Competing interests} 

The authors have no relevant financial or non-financial interests to disclose.

The authors declare that they have no conflict of interest.

The authors have no competing interests to declare that are relevant to the content of this article.

\noindent \textbf{Data availability }We do not analyze or generate any data
sets because our work proceeds within a theoretical and mathematical
approach.\medskip

\noindent \textbf{References}\medskip

\noindent Aliprantis, C. D., \& Border, K. C. (2006). \textit{Infinite Dimensional Analysis A Hitchhiker's Guide}. Springer, New York, Third edition.\smallskip

\noindent Clark, S. A. (1993). The valuation problem in arbitrage price theory, \textit{Journal of Mathematical Economics} 22, 463-478.\smallskip

\noindent Gilboa I., \& Schmeidler, D. (1989). Maximin expected utility with non-unique prior, \textit{Journal of Mathematical Economics}, 18, 141-153.\smallskip

\noindent Gundel, A., \& Weber, S. (2007). Robust utility maximization with limited downside risk in incomplete markets, \textit{Stochastic Processes and their Applications,} 117, 1663--1688.\smallskip

\noindent Guo, S., \& Xu, H. (2019). Distributionally robust shortfall risk optimization model and its approximation, \textit{Mathematical Programming}, 174, 473-498.\smallskip

\noindent Kimball, M. S. (1990). Precautionary Saving in the Small and in the Large,\textit{\ Econometrica} 58, 53-73.\smallskip

\noindent Kountzakis, C., \& Polyrakis, I. A. (2013). Coherent risk measures in general economic models and price bubbles, \textit{Journal of Mathematical Economics}, 49 (3), 201-209.\smallskip

\noindent LeRoy, S. F., \& Werner, J. (2014). \textit{Principles of Financial Economics}, second edition, Cambridge University Press.\smallskip

\noindent Ross, S. (1978). A Simple Approach to the Valuation of Risky Streams, \textit{The Journal of Business}\ 51, 453--475.\smallskip

\noindent Schmeidler, D. (1989). Subjective Probability and Expected Utility without Additivity, \textit{Econometrica}, Vol. 57, No. 3, 571-587.

\noindent Weber, E. U. (2004). Perception Matters: Psychophysics for Economists. In J. Carrillo and I. Brocas (Eds.), Psychology and Economics (165-176). Oxford, UK: Oxford University Press.

\end{document}